\tikzstyle{background}=[rectangle,fill=gray!10, inner sep=0.1cm, rounded corners=0mm]
\tikzstyle{background}=[rectangle,fill=gray!10, inner sep=0.1cm, rounded corners=0mm]
\tikzstyle{loc}=[draw,rectangle,minimum size=1.4em,inner sep=0em]
\tikzstyle{trans}=[-latex, rounded corners]
\tikzstyle{trans2}=[-latex, dashed, rounded corners]
\tikzstyle{player2}=[draw,dashed,minimum size=5mm]
\newif\if@restonecol
\definecolor{lightgray}{gray}{0.9}
\DeclareMathOperator{\interior}{int}
\newcommand{\norm}[1]{\|#1\|}
\mathchardef\breakingcomma\mathcode`\,
\newcommand{\point}[1]{{{#1}}}
\newcommand{\px}{\point{x}}
\newcommand{\py}{\point{y}}
\newcommand{\vv}{\vec{v}}
\newcommand{\vb}{\vec{b}}
\newcommand{\vr}{\vec{r}}
\newcommand{\ball}[2]{B_{#1}(#2)}
\newcommand{\set}[1]{\left\{ #1 \right\}}
\newcommand{\seq}[1]{\langle #1 \rangle}
\newcommand{\Rplus}{{\mathbb R}_{\geq 0}}
\newcommand{\Nat}{\mathbb N}
\newcommand{\Real}{\mathbb R}
\newcommand{\Aa}{\mathcal{A}}
\newcommand{\Oo}{\mathcal{O}}
\newcommand{\Hh}{\mathcal{H}}
\newcommand{\Cc}{\mathcal{C}}
\newcommand{\Mm}{\mathcal{M}}
\newcommand{\Ww}{\mathcal{W}}
\newcommand{\Ashu}[1]{}
\newcommand{\Umang}[1]{}
\newcommand{\Umangm}[1]{}
\newcommand{\Reach}{\textsc{Reach}}
\newcommand{\RUN}{\text{\it Run}}
\newcommand\ScaleExists[1]{\vcenter{\hbox{\scalefont{#1}$\exists$}}}
\newcommand\ScaleForAll[1]{\vcenter{\hbox{\scalefont{#1}$\forall$}}}
\DeclareMathOperator*\bigexists{%
  \vphantom\sum
    \mathchoice{\ScaleExists{2}}{\ScaleExists{1.4}}{\ScaleExists{1}}{\ScaleExists{0.75}}}
\DeclareMathOperator*\bigforall{%
  \vphantom\sum
    \mathchoice{\ScaleForAll{2}}{\ScaleForAll{1.4}}{\ScaleForAll{1}}{\ScaleForAll{0.75}}}
\begin{document}

\title{The Reach-Avoid Problem for Constant-Rate Multi-Mode
  Systems\thanks{This research was supported in part by CEFIPRA project
    AVeRTS and by DARPA under agreement number FA8750-15-2-0096.
    All opinions stated are those of the authors and
    not necessarily of the organizations that have supported this research.}
  \thanks{The authors would like to thank the anonymous reviewers for their careful reading
    of the earlier versions of this manuscript
    and their many insightful comments and suggestions. }}
\author{
   Shankara Narayanan Krishna\inst{1} \and
   Aviral Kumar\inst{1} \and\\
   Fabio Somenzi\inst{2} \and
   Behrouz Touri\inst{2} \and
   Ashutosh Trivedi\inst{2}} 

 \institute{
   Indian Institute of Technology Bombay, India.
   \and 
   University of Colorado Boulder, USA.
 }
\maketitle

\begin{abstract}
  A constant-rate multi-mode system is a hybrid system that can switch freely
  among a finite set of modes, and whose dynamics is specified by a finite
  number of real-valued variables with mode-dependent constant rates. 
  Alur, Wojtczak, and Trivedi have shown that reachability problems for
  constant-rate multi-mode systems for open and convex safety sets can be 
  solved in polynomial time.
  In this paper we study the reachability problem for non-convex state spaces,
  and show that this problem is in general undecidable.
  We recover decidability by making certain assumptions about the safety set.
  We present a new algorithm to solve this problem and compare its performance
  with the popular sampling based algorithm rapidly-exploring random tree
  (RRT) as implemented in the Open Motion Planning Library (OMPL).
\end{abstract} 

\section{Introduction}
\label{sec:introduction}
Autonomous vehicle planning and control
frameworks~\cite{KTINTH15,apex} often follow the  hierarchical planning architecture
outlined by Firby~\cite{Firby89} and Gat~\cite{Gat98}.
The key idea here is to separate the complications involved in low-level
hardware control from high-level planning decisions to accomplish the navigation
objective. 
A typical example of such separation-of-concerns is proving the controllability
property (vehicle can be steered from any start point to arbitrary neighborhood
of the target point) of the motion-primitives of the vehicle followed by the search
(path-planning) for an obstacle-free path (called the \emph{roadmap}) and then
utilizing the controllability property to compose the low-level primitives to
follow the path (path-following).
However, in the absence of the controllability property, it is not always
possible to follow arbitrary roadmaps with given motion-primitives.
In these situations we need to study a motion planning problem that is
not opaque to the motion-primitives available to the controller.

We study this motion planning problem in a simpler setting of systems modeled as
constant-rate multi-mode systems~\cite{ATW12}---a switched system with
constant-rate dynamics (vector) in every mode---and study the reachability
problem for the non-convex safety sets. 
Alur et al.~\cite{ATW12} studied this problem for convex safety sets and 
showed that it can be solved in polynomial time.
Our key result is that even for the case when the safety set is defined using
polyhedral obstacles, the problem of deciding reachability is undecidable.
On a positive side we show that if the safety set is an open set
defined by linear inequalities, the problem is decidable and can be
solved using a variation of cell-decomposition algorithm~\cite{SS83}.
We present a novel bounded model-checking~\cite{clarke2001bounded} inspired
algorithm equipped with acceleration to decide the reachability. 
We use the Z3-theorem prover as the constraint satisfaction engine for the
quadratic formulas in our implementation. 
We show the efficiency of our algorithm by comparing its performance with
the popular sampling based algorithm \emph{rapidly-exploring
  random tree} (RRT)  as implemented in the \textit{Open Motion Planning
Library (OMPL)}.

For a detailed survey of motion planning algorithms we refer to the
 excellent expositions by Latombe~\cite{latombe2012robot} and
 LaValle~\cite{Lav06}.
The motion-planning problem while respecting system dynamics can be
modeled~\cite{frazzoli2000robust} in the framework of hybrid
automata~\cite{ACHH92,Hen96}; however the reachability problem is
undecidable even for simple stopwatch automata~\cite{HKPV98}.
There is a vast literature on decidable subclasses of hybrid
automata~\cite{ACHH92,BBM98}.
Most notable among these classes are initialized rectangular hybrid
automata~\cite{HKPV98}, two-dimensional piecewise-constant derivative
systems~\cite{AMP95}, timed automata~\cite{alurDill94}, and
discrete-time control for hybrid automata~\cite{HK99}. 
For a review of related work on multi-mode systems we refer to~\cite{AFMT13,ATW12}.


\section{Motivating Example}
\label{sec:motivation}
Let us consider a two-dimensional
multi-mode system with three modes $m_1, m_2$ and $m_3$ shown geometrically with
their rate-vectors in Figure~\ref{fig:l-shaped}(a).
We consider the reach-while-avoid problem in the arena given in
Figure~\ref{fig:l-shaped}(b) with two rectangular obstacles $\Oo_1$ and $\Oo_2$
and source and target points $\px_s$ and $\px_t$, respectively.
In particular, we are interested in the question whether it is possible to move
a point-robot from point $\px_s$ to point $\px_t$ using directions dictated by
the multi-mode system given in Figure~\ref{fig:l-shaped}(a) while avoiding
passing through or even grazing any obstacle.

It follows from our results in Section~\ref{sec:undec} that in general the
problem of deciding reachability is undecidable even with polyhedral obstacles.
However, the example considered in Figure~\ref{fig:l-shaped} has an interesting
property that the safety set can be represented as a union of finitely many
polyhedral open sets (cells). 
This property, as we show later, makes the problem decidable.
In fact, if we decompose the workspace  into cells using any off-the-shelf
cell-decomposition algorithm, we only need to consider the sequences of obstacle-free
cells to decide reachability.
In particular, for a given sequence of obstacle-free convex sets such
that the starting
point is in the first set, and the target point is in last set, one can write
a linear program checking whether there is a sequence of intermediate states,
one each in the intersection of successive sets, such that these points are
reachable in the sequence using the constant-rate multi-mode system. 
Our key observation is that one need not to consider cell-sequences larger than
the total number of cells since for reachability, it does not help for the system to
leave a cell and enter it again.

This approach, however, is not very efficient since one
needs to consider all sequences of the cells.
However, this result provides an upper bound on sequence of ``meta-steps'' or
``bound'' through the cells that system needs to take in order to reach the
target and hint towards a bounded model-checking~\cite{clarke2001bounded}
approach.
We progressively increase bound $k$  and ask whether there is a sequence of points
$\px_0, \ldots, \px_{k+1}$ such that $\px_0 = \px_s$, $\px_{k+1} = \px_t$, and
for all $0 \leq i \leq k$ we have that $\px_i$ can reach $\px_{i+1}$ using the
rates provided by the multi-mode system (convex cone of rates translated to
$\px_i$ contains $\px_{i+1}$) and the line segment $\lambda \px_i + (1-\lambda)
\px_{i+1}$ does not intersect any obstacle.
Notice that if this condition is satisfied, then the system can safely move from
point $\px_i$ to $\px_{i+1}$ by carefully choosing a scaling down of the
rates so as to stay in the safety set, as illustrated in Figure~\ref{fig:l-shaped}.

Let us first consider $k= 0$ and notice that one can reach point $\px_t$ from
$\px_s$ using just the mode $m_1$, however unfortunately the line segment
connecting these points passes through both obstacles.
In this case we increase the bound by $1$ and consider the problem of finding a
point $\px$ such that the system can reach from $\px_s$ to $\px$ and also from
$\px$ to $\px_t$, and the line segment connecting $\px_s$ with $\px$, and $\px$
with $\px_t$ do not intersect any obstacles.
It is easy to see from the Figure~\ref{fig:l-shaped} that it is indeed the
case. We can alternate modes $m_1, m_2$ from $x_s$ to $x$, and 
modes $m_1, m_3$ from $x$ to $x_t$. 
Hence, there is a schedule that steers the system from $\px_s$ to $\px_t$ as
shown in the Figure~\ref{fig:l-shaped}(c).
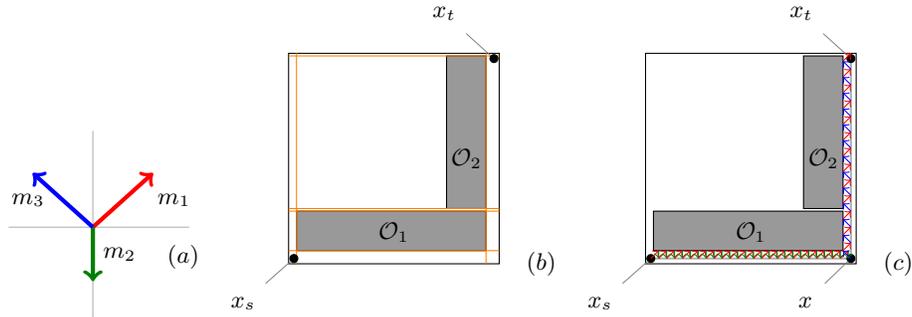
\begin{figure}[t]
  \centering
 
    \begin{tikzpicture}[scale=0.4]
    \tikzstyle{lines}=[draw=black!30,rounded corners]
    \tikzstyle{vectors}=[-latex, rounded corners]
    \tikzstyle{rvectors}=[-latex,very thick, rounded corners]
    \draw[lines] (4.2,0)--(10.2,0);
    \draw[lines] (7, 3.2)--(7,-3);
    \draw[->, ultra thick, red] (7, 0) --node[black, right]{$~~~m_1$} (9, 1.8) node[left]{$$};
    \draw[->, ultra thick, green!50!black] (7, 0) --node[black, right]{$m_2$} (7, -1.8) node[left]{$$};
    \draw[->, ultra thick, blue] (7, 0) --node[black, left]{$m_3~$} (5, 1.8) node[left]{$$};
    \node at (10, -1) {$(a)$};
    \end{tikzpicture}
    \hfill
    \begin{tikzpicture}[scale=0.7]
      \draw (0,0) rectangle (4, 4);
        
      \filldraw[fill=black!40!white, draw=black] (0.15, 1) rectangle(3.75, 0.25);
      \filldraw[fill=black!40!white, draw=black] (3, 3.95) rectangle(3.75, 1.05);
      
      \draw[-,orange](0,1) -- (4, 1);
      \draw[-,orange](0,0.25) -- (4, 0.25);
      \draw[-,orange](3.75,0) -- (3.75, 4);
      \draw[-,orange](0.15,0) -- (0.15, 4);
      
      \draw[-,orange](0, 1.05) -- (4, 1.05);
      \draw[-,orange](0,3.95) -- (4, 3.95);

      \node [fill, draw, circle, minimum width=3pt, inner sep=0pt, pin={[fill=white, outer sep=2pt]225:$\px_s$}] at (0.1,0.1) {}; 
      \node [fill, draw, circle, minimum width=3pt, inner sep=0pt, pin={[fill=white, outer sep=2pt]135:$\px_t$}] at (3.9,3.9) {};  
      
      \node at (2, .6) {$\Oo_1$};
      \node at (3.38, 2) {$\Oo_2$};
      \node at (4.8, 0) {$(b)$};
    \end{tikzpicture}
    \hfill
    \begin{tikzpicture}[scale=0.7]
	\draw (0,0) rectangle (4, 4);

        \filldraw[fill=black!40!white, draw=black] (0.15, 1) rectangle(3.75,
        0.25);
	\filldraw[fill=black!40!white, draw=black] (3, 3.95) rectangle(3.75, 1.05);
	 \draw [black!30, thick] (0.1, 0.1) -- (3.9, 0.1);
	 \draw [black!30, thick] (3.9, 0.1) -- (3.9, 3.9);    
	 \node [fill, draw, circle, minimum width=3pt, inner sep=0pt, pin={[fill=white, outer sep=2pt]225:$\px_s$}] at (0.1,0.1) {}; 
	 \node [fill, draw, circle, minimum width=3pt, inner sep=0pt, pin={[fill=white, outer sep=2pt]135:$\px_t$}] at (3.9,3.9) {};  
	 \node [fill, draw, circle, minimum width=3pt, inner sep=0pt,
           pin={[fill=white, outer sep=2pt]225: $\px$}] at (3.9,0.1) {};
         \def\x{0.15}
         \def\y{0.15}
         \def\z{0.15}
         \foreach \i in {0,...,24}{
           \draw[->,red](0.1+\z*\i,0.1)--(0.1+\x+\z*\i, 0.1+\y);
           \draw[->,green!50!black](0.1+\x+\z*\i,0.1+\y)--(0.1+\x+\z*\i, 0.1);
	 }
         
         \def\x{0.15}
         \def\y{0.15}
         \def\z{0.3}
         \foreach \i in {0,...,12}{
           \draw[->,blue](3.9,0.1+\z*\i)--(3.9-\x, 0.1+\z*\i+\y);
           \draw[->,red](3.9-\x, 0.1+\z*\i+\y)--(3.9, 0.1+\z*\i+2*\y);
	 }

         \node at (2, .6) {$\Oo_1$};
        \node at (3.38, 2) {$\Oo_2$};
         \node at (4.8, 0) {$(c)$};
\end{tikzpicture}
\caption{ a) A multi-mode system, b) an
  ``L''-shaped arena consisting of obstacles $\Oo_1$ and $\Oo_2$ with
  start and target points $\px_s$ and $\px_t$ along with the
  cell-decomposition shown by orange lines, and
  c) a safe schedule from $\px_s$ to $\px_t$. }
\label{fig:l-shaped}    
\end{figure}

The property  we need to check to ensure a safe schedule is the following: 
there exists a sequence of points $x_s=x_0,x_1,x_2,\dots, x_n=x_t$ such that 
for all $0\leq \lambda \leq 1$, and for all $i$, the line $\lambda x_i+(1-\lambda)x_{i+1}$ 
joining $x_i$ and $x_{i+1}$ does not intersect any obstacle $\mathcal{O}$. 
This can be thought of as a first-order formula of the
form $\exists X \forall Y F(X, Y)$ where $F(X, Y)$ is a linear formula.
By invoking the Tarski-Seidenberg theorem we know that checking the satisfiability
of this  property is decidable.
However, one can also give a direct quantifier elimination based on
Fourier-Motzkin elimination procedure to get existentially quantified quadratic
constraints that can be efficiently checked using theorem provers such as Z3
(\url{https://github.com/Z3Prover/z3}). 
This gives us a complete procedure to decide reachability for multi-mode systems
when the safety set can be represented as a union of finitely many polyhedral open sets.

\section{Problem Formulation}
\label{sec:problem}
\noindent{\bf Points and Vectors.} 
Let $\Real$ be the set of real numbers.
We represent the states in our system as points in $\Real^n$, which is equipped
with the standard \emph{Euclidean norm} $\norm{\cdot}$.
We denote points in this state space by $\px, \py$,  vectors by $\vr, \vv$, and 
the $i$-th coordinate of point $\px$ and vector $\vr$ by $\px(i)$ and $\vr(i)$,
respectively. 
The distance $\norm{\px, \py}$ between points $\px$ and $\py$ is defined as
$\norm{\px - \py}$. 

\noindent{\bf Boundedness and Interior.} 
We denote an {\em open ball} of radius $d \in \Rplus$ centered at $\px$ as
$\ball{d}{\px} {=} \set{\py {\in} \Real^n \::\: \norm{\px,\py} < d}$.
We denote a closed ball of radius $d \in \Rplus$ centered at $\px$ as
$\overline{\ball{d}{\px}}$. 
We say that a set $S \subseteq \Real^n$ is {\em bounded} if there exists
$d \in \Rplus$ such that, for all $\px, \py \in S$, we have
$\norm{\px,\py} \leq d$.
The {\em interior} of a set $S$, $\interior(S)$, is the set of all points
$\px \in S$, for which there exists $d > 0$ s.t. $\ball{d}{\px} \subseteq S$.

\noindent{\bf Convexity.} A point $\px$ is a \emph{convex
  combination} of a finite set of points $X = \set{\px_1, \px_2, \ldots, \px_k}$ if
there are $\lambda_1, \lambda_2, \ldots, \lambda_k \in [0, 1]$ such that
$\sum_{i=1}^{k} \lambda_i = 1$ and $\px = \sum_{i=1}^k \lambda_i \cdot \px_i$.
We say that $S \subseteq \Real^n$ is {\em convex} iff, for all
$\px, \py \in S$ and all $\lambda \in [0,1]$, we have
$\lambda \px + (1-\lambda) \py \in S$ and moreover,
$S$ is a {\em convex polytope} if there exists $k \in \Nat$, a
matrix $A$ of size $k \times n$ and a vector $\vb \in \Real^k$ such that $\px
\in S$ iff $A\px \leq \vb$.
A closed \emph{hyper-rectangle} is a convex polytope that can
be characterized as $\px(i) \in [a_i, b_i]$ for each $i \leq n$ where $a_i, b_i
\in \Real$.

\begin{definition}
  \label{def:BMMS}
  A (constant-rate) multi-mode system (MMS) is a tuple $\Hh = (M, n, R)$ where: 
    $M$ is a finite nonempty set of \emph{modes}, 
    $n$ is the number of continuous variables, and 
    $R : M \to \Real^n$ maps to each mode a rate vector
    whose $i$-th entry specifies the change in the value of the $i$-th
    variable per time unit.
    For computation purposes, we assume that the real numbers
    are rational.
\end{definition}

\begin{example}
  An example of a 2-dimensional multi-mode system  $\Hh = (M, n, R)$ is shown in
Figure~\ref{fig:l-shaped}(a) where $M = \set{ m_1, m_2, m_3}$, $n = 2$, and the
rate vector is such that $R(m_1) = (1, 1)$, $R(m_2) = (0, -1)$, and $R(m_3) =
(-1, 1)$. 
\end{example}
A \emph{schedule} of an MMS specifies a timed sequence of mode switches.
Formally, a \emph{schedule} is defined as a finite or infinite sequences of
\emph{timed actions}, where a timed action $(m, t) \in M \times \Rplus$ is a
pair consisting of a mode and a time delay.
A finite \emph{run} of an MMS $\Hh$ is a finite sequence of states and timed
actions $r = \seq{\px_0, (m_1, t_1), \px_1,  \ldots, (m_k, t_k), \px_k}$
such that for all $1 \leq i \leq k$ we have that
$\px_i = \px_{i-1} + t_i \cdot R(m_i)$.
For such a run $r$ we say that $\px_0$ is the \emph{starting state}, while
$\px_k$ is its \emph{terminal state}.
An \emph{infinite run} of an MMS $\Hh$ is similarly defined to be an infinite
sequence $\seq{\px_0, (m_1, t_1), \px_1, (m_2, t_2), \ldots}$ such that for all
$i \geq 1$ we have that $\px_i = \px_{i-1} + t_i \cdot  R(m_i)$.

Given a finite schedule $\sigma = \seq{(m_1, t_1), (m_2, t_2), \ldots, (m_k,
  t_k)}$ and a state $\px$, we write $\RUN(\px, \sigma)$ for the (unique)
finite run  $\seq{\px_0, (m_1, t_1), \px_1, (m_2, t_2), \ldots,
  \px_k}$ such that $\px_0 = \px$.
In this case, we also say that the schedule $\sigma$ steers the MMS $\Hh$ from the state
$\px_0$ to the state $\px_k$.

We consider the problem of MMS reachability within a given \emph{safety set}
$S$.
We specify the safety set by a pair $(\Ww, \Oo)$, where
$\Ww \subseteq \Real^n$ is called the \emph{workspace} and
$\Oo = \set{\Oo_1, \Oo_2, \ldots, \Oo_k}$ is a finite set of
\emph{obstacles}.
In this case the safety set $S$ is characterized as $S_{\Ww \backslash \Oo}  = \Ww \setminus \Oo$.
 We assume in the rest of the
paper that $\Ww = \Real^n$ and for all $1 \leq i \leq k$, $\Oo_i$ is a
\emph{convex} (not necessarily closed) polytope specified by a set of linear
inequalities. 

We say that a finite run
$\seq{\px_0, (m_1, t_1), \px_1, (m_2, t_2),  \ldots}$ is $S$-safe if for
all $i \geq 0$ we have that $\px_i \in S$ and
$\px_i + \tau_{i+1} \cdot R(m_{i+1}) \in S$ for all $\tau_{i+1} \in [0, t_{i+1}]$.
Notice that if $S$ is a convex set then for all $i \geq 0$,
$\px_i \in S$ implies that for all $i \geq 0$ and for all $\tau_{i+1} \in
[0, t_{i+1}]$ we have that $\px_i + \tau_{i+1} \cdot  R(m_{i+1}) \in S$.
We say that a schedule $\sigma$ is $S$-safe from a state $\px$, or is $(S,
\px)$-safe, if the
corresponding unique run $\RUN(\px, \sigma)$ is $S$-safe.
Sometimes we simply call a schedule or a run safe when the safety set and
the starting state are clear from the context.
We say that a state $\px'$ is $S$-safe reachable from a state $\px$ if there
exists a finite schedule $\sigma$  that is $S$-safe at $\px$ and steers the
system from state $\px$ to $\px'$.

We are interested in solving the following problem.

\begin{definition}[Reachability]
  Given a constant-rate multi-mode system $\Hh = (M, n, R)$, safety
  set $S$, start state $\px_s$, and target state $\px_t$, the
  reachability problem $\Reach(\Hh, S_{\Ww \backslash \Oo}, \px_s, \px_t)$ is to decide
  whether there exists an $S$-safe finite schedule that steers the
  system from state $\px_s$ to $\px_t$.
\end{definition}

Alur \emph{et al.}~\cite{ATW12} gave a polynomial-time algorithm to
decide if a state $\px_t$ is $S$-safe reachable from a state $\px_0$
for an MMS $\Hh$ for a convex safety set $S$.  In particular, they
characterized the following necessary and sufficient condition.
\begin{theorem} [\cite{ATW12}]
  \label{thmcms}
  Let $\Hh = (M, n, R)$ be a multi-mode system and let $S \subset \Real^n$ be an
  open, convex safety set. 
  Then, there is an $S$-safe schedule from $\px_s \in S$ to $\px_t \in S$, if and only if there is $\vec{t} \in \Rplus^{|M|}$
  satisfying:
  $  \px_s + \sum_{i=1}^{|M|} R(m_i) \cdot \vec{t}(i) = \px_t$. 
\end{theorem}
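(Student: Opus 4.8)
The statement is an ``if and only if,'' so the plan is to prove the two implications separately; the reverse (sufficiency) direction is where the openness and convexity of $S$ do the real work. For the forward direction (``only if''), suppose an $S$-safe schedule $\seq{(m_1,t_1),\ldots,(m_k,t_k)}$ steers the system from $\px_s$ to $\px_t$. By the definition of a run, $\px_t = \px_s + \sum_{j=1}^{k} t_j\cdot R(m_j)$, and I would simply aggregate the dwell times mode by mode, setting $\vec{t}(i) = \sum_{j:\,m_j=m_i} t_j \geq 0$ for each mode $m_i$. Collecting terms yields $\px_s + \sum_{i=1}^{|M|} R(m_i)\cdot\vec{t}(i) = \px_t$ with $\vec{t}\in\Rplus^{|M|}$. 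Note this direction uses only that \emph{some} schedule reaches $\px_t$; neither the safety constraint nor the geometry of $S$ plays any role.

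For the reverse direction (``if''), suppose $\vec{t}\in\Rplus^{|M|}$ satisfies $\px_s + \sum_i R(m_i)\vec{t}(i) = \px_t$. The naive schedule that runs each mode $m_i$ once for its full time $\vec{t}(i)$ already reaches $\px_t$, but its polygonal trajectory may stray outside $S$. The plan is to interleave the modes into $N$ equal rounds so that the trajectory hugs the segment $[\px_s,\px_t]$. First I would extract a uniform safety margin: since $S$ is convex and $\px_s,\px_t\in S$, the segment $[\px_s,\px_t]$ lies in $S$, and since $S$ is open while the segment is compact, the distance from the segment to $\Real^n\setminus S$ is some $\epsilon > 0$, so the closed $\epsilon$-neighborhood of the segment is contained in $S$. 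Then I would analyze the schedule that in each of $N$ rounds applies $m_1,\ldots,m_{|M|}$ for times $\vec{t}(1)/N,\ldots,\vec{t}(|M|)/N$: the net displacement per round is $\tfrac{1}{N}(\px_t-\px_s)$, so after $k$ complete rounds the state is exactly $\px_s+\tfrac{k}{N}(\px_t-\px_s)$, a point on the segment, while within any single round every intermediate state (including the continuous points reached during a timed action) deviates from that round's starting waypoint by at most $\tfrac{1}{N}\sum_i \vec{t}(i)\,\norm{R(m_i)} =: C/N$. Choosing $N > C/\epsilon$ forces the entire trajectory into the $\epsilon$-neighborhood of the segment, hence into $S$, giving an $S$-safe schedule from $\px_s$ to $\px_t$.

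The forward direction is routine bookkeeping, so I expect all the substance to lie in the sufficiency argument, and the hard part will be its two geometric ingredients. The first is obtaining the uniform margin $\epsilon$, which rests on combining openness of $S$ with compactness of the segment (so that the positive, continuous distance-to-complement function attains a positive minimum). The second, and the point I would be most careful about, is that the definition of $S$-safety demands that not only the discrete waypoints $\px_i$ but also \emph{all} continuous intermediate points $\px_i+\tau\cdot R(m_{i+1})$ lie in $S$; the per-round displacement bound $C/N$ is precisely what controls these continuous excursions uniformly and lets me drive them below $\epsilon$ by taking $N$ large.
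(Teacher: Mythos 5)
The paper does not actually prove Theorem~\ref{thmcms} --- it is imported from \cite{ATW12} --- but your argument is correct and is essentially the standard one, and its sufficiency direction is exactly the round-robin rescaling idea the paper embodies in Algorithm~\ref{algo:mms} (\textsc{Reach\_Convex}), where the mode times $t^{(m)}$ are split into $l$ interleaved rounds so the trajectory hugs the segment from $\px_s$ to $\px_t$. One cosmetic point: with $\epsilon$ defined as the distance from the segment to $\Real^n\setminus S$, it is the \emph{open} $\epsilon$-neighborhood (not the closed one) that is guaranteed to lie in $S$; since you take $N>C/\epsilon$, every trajectory point is within distance strictly less than $\epsilon$ of the segment, so the conclusion is unaffected.
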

A key property of this result is that if $\px_t$ is reachable from $\px_s$ without
considering the safety set, then it is also reachable inside arbitrary convex
set as long as both $\px_s$ and $\px_t$ are strictly in the interior of the
safety set.

We study the extension of this theorem for the reachability problem with
non-convex safety sets.  A key contribution of this paper is a precise
characterization of the decidability of the reachability problem for
multi-mode systems.
\begin{theorem}
  \label{thm:main}
  Given a constant-rate multi-mode system $\Hh$, workspace
  $\Ww = \Real^n$, obstacles set $\Oo$, start state $\px_s$ and target
  state $\px_t$, the reachability problem
  $\Reach(\Hh, S_{\Ww \setminus \Oo}, \px_s, \px_t)$ is in general
  undecidable.
  However, if the obstacle set $\Oo$ is given as finitely many closed
  polytopes, each defined by a finite set of linear inequalities, then
  reachability is decidable.
\end{theorem}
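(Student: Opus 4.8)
The plan is to prove the two halves separately: undecidability when obstacles may be arbitrary (in particular non-closed) convex polytopes, and decidability when every obstacle is a closed polytope. For the negative part I would reduce from the halting problem for deterministic two-counter (Minsky) machines, encoding a configuration $(q,c_1,c_2)$ as a point in $\Real^n$ for a suitable fixed $n$. The modes supply a fixed finite palette of rate vectors realizing increments, decrements, and the transport between instruction gadgets; since an MMS carries no discrete control state, the program counter $q$ must itself be encoded in the continuous coordinates, and the obstacle geometry must enforce that a run is $S$-safe exactly when it faithfully simulates the machine. The indispensable ingredient is that obstacles are allowed to be \emph{non-closed}: then $S=\Real^n\setminus\Oo$ need not be open, and a non-closed face can pin a safe trajectory onto an exact affine constraint such as a counter being integral or a zero test $x_1=0$. (By contrast, when $S$ is open the system always retains slack to scale its rates down, which is precisely what yields decidability below.) I would build gadgets for increment, decrement, zero-test, and conditional jump, and show that the target $\px_t$ (encoding the halting configuration) is $S$-safe reachable from $\px_s$ (encoding the initial configuration) iff the machine halts. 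The main obstacle here is the combination of an exact zero-test with exactness maintenance: the gadgets must keep the counters encoded as exact integers across an unbounded computation even though the MMS may spend an arbitrary real delay in each mode, and verifying that no unintended ``cheating'' trajectory is $S$-safe (which is exactly what the non-closed boundaries rule out) is the crux.

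For the positive part, $S=\Real^n\setminus\Oo$ is now open because $\Oo$ is a finite union of closed polytopes. Let $H$ be the finite set of all hyperplanes carrying the linear inequalities defining the obstacles, and form the arrangement of $H$, partitioning $\Real^n$ into finitely many relatively open convex cells $C_1,\dots,C_N$ with $N$ finite and computable. On each cell every defining functional has constant strict sign or is identically zero, so membership in each $\Oo_j$, and hence in $S$, is constant on a cell; thus $S$ is exactly the union of the ``free'' cells, those disjoint from $\Oo$.

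I would then reduce $\Reach(\Hh,S_{\Ww\setminus\Oo},\px_s,\px_t)$ to a bounded search over straight-segment ``meta-steps'', as sketched in Section~\ref{sec:motivation}: $\px_t$ is $S$-safe reachable from $\px_s$ iff there is a finite sequence $\px_0=\px_s,\px_1,\dots,\px_\ell=\px_t$ such that for every $i$, (a)~$\px_{i+1}-\px_i$ lies in the convex cone generated by $\{R(m):m\in M\}$, and (b)~the segment $\{\lambda\px_i+(1-\lambda)\px_{i+1}:\lambda\in[0,1]\}$ is disjoint from every obstacle. Completeness is immediate, since the states of any $S$-safe run already form such a sequence with each meta-step a single mode. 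For soundness, (b) places the compact segment inside the open set $S$, hence at positive distance $\epsilon$ from $\Oo$; writing the displacement as a nonnegative combination of the $R(m)$ and realizing it by a sufficiently fine zig-zag keeps the trajectory within the $\epsilon$-tube and therefore $S$-safe, which is the same interior-slack argument underlying Theorem~\ref{thmcms}.

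Finally I would bound $\ell$ and discharge the resulting formula. Following the observation of Section~\ref{sec:motivation}, any witnessing sequence whose induced cell-trace revisits a cell can be short-circuited, so it suffices to search $\ell\le N$. Granting such a computable bound, for each $\ell$ the existence of $\px_1,\dots,\px_{\ell-1}$ satisfying (a)--(b) is a first-order sentence over $(\Real,+,\le)$: an outer $\exists$ over the intermediate points together with the nonnegative cone coefficients $\vt$, an inner $\forall\lambda\in[0,1]$, and, for each obstacle, an $\exists$ selecting a violated defining inequality. This is decidable by Tarski--Seidenberg; concretely, since the constraints are linear in $\lambda$ one eliminates the $\forall\lambda$ by Fourier--Motzkin to obtain existential quadratic constraints dischargeable by an SMT solver such as Z3. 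Answering yes iff some $\ell\le N$ is feasible then decides $\Reach$. The delicate point, and the step I expect to require the most care, is the cell-revisiting argument that justifies the bound $\ell\le N$ while preserving the cone condition~(a).
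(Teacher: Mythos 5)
Your proposal follows essentially the same route as the paper: a two-counter-machine reduction exploiting non-closed obstacle faces to pin safe trajectories onto exact constraints for undecidability, and, for decidability, a finite computable cover of the open safety set by convex cells, a length bound obtained by short-circuiting revisited cells (the cone condition surviving because a sum of cone elements is again in the cone), and Tarski--Seidenberg / Fourier--Motzkin elimination to discharge each bounded existential instance. The only cosmetic differences are that you build the cell cover from the full hyperplane arrangement rather than the vertical decomposition the paper imports from Latombe in Lemma~\ref{cell-decomposition-to-cover}, and that you phrase the witness as a sequence of obstacle-free straight segments (as in the paper's Algorithm~\ref{alg:mainLoop}) rather than as the channel of pairwise-intersecting cells of Lemma~\ref{witness}; both variants rest on the same appeal to Theorem~\ref{thmcms} inside a convex open neighborhood.
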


\section{Decidability}
\label{sec:decidability}
We prove the decidability condition of Theorem \ref{thm:main} in this section.
\begin{theorem}
\label{thm:dec}
For a MMS $\Hh = (M, n, R)$, a safety set $S$, a start state $\px_s$,
and a target state $\px_t$, the problem $\Reach(\Hh, S_{\Ww \backslash \Oo}, \px_s, \px_t)$
is decidable if $\Oo$ is given as finitely many closed polytopes.
\end{theorem}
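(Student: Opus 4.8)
The plan is to reduce reachability to a \emph{bounded} search over sequences of obstacle-free convex cells and then to decide each bounded instance by quantifier elimination. Since every obstacle $\Oo_\ell$ is a closed polytope, the safety set $S = \Real^n \setminus \bigcup_\ell \Oo_\ell$ is open, and the arrangement of the finitely many hyperplanes bounding the obstacles partitions the free space into finitely many open convex cells $C_1, \ldots, C_N$, as produced by a cell-decomposition algorithm such as that of Schwartz and Sharir~\cite{SS83}. Here $N$ depends only on $\Oo$ and the dimension $n$. A useful fact, which I would record first, is that the relative interior of a facet shared by two \emph{free} cells still lies in $S$, so intermediate waypoints may be placed on such shared faces without leaving the safety set.

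Next I would prove the per-step lemma that links geometry to Theorem~\ref{thmcms}: for $\px, \px' \in S$ there is an $S$-safe schedule steering the system from $\px$ to $\px'$ whose run stays inside one convex region if and only if (i) $\px' - \px \in \mathrm{cone}(R) := \set{\sum_i R(m_i)\, t_i : t_i \geq 0}$, and (ii) the closed segment $[\px,\px']$ avoids every obstacle. The reverse implication is immediate. For the forward implication, condition (ii) together with compactness of the segment yields a positive clearance $\varepsilon > 0$, so the set of points at distance less than $\varepsilon$ from $[\px,\px']$ is an open convex ``capsule'' contained in $S$ and containing both $\px$ and $\px'$ in its interior; applying Theorem~\ref{thmcms} inside this capsule turns the cone membership (i) into an actual $S$-safe schedule. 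Note that (i) is exactly the reachability condition $\px + \sum_i R(m_i)\, \vec t(i) = \px'$ of Theorem~\ref{thmcms}.

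I would then show that $\px_t$ is $S$-safe reachable from $\px_s$ if and only if there is a sequence of waypoints $\px_s = \py_0, \py_1, \ldots, \py_{m+1} = \px_t$ with $m \leq N$ such that every consecutive pair $(\py_j, \py_{j+1})$ satisfies (i)--(ii). Soundness of this reduction just concatenates the per-step schedules. For completeness, I would take an arbitrary $S$-safe run, refine it at cell boundaries into maximal linear pieces, each lying in a single cell and each trivially satisfying (i)--(ii), and then apply the observation that revisiting a cell is never necessary: if two pieces lie in the same convex cell $C$, convexity of $C$ lets me replace the entire intermediate portion by a single segment inside $C$, whose displacement is again in $\mathrm{cone}(R)$ because the cone is closed under addition. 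Each such shortcut strictly decreases the number of pieces, so iterating it terminates in a sequence whose pieces occupy pairwise distinct cells, giving $m \leq N$. This shortcutting/bounding argument, together with the openness bookkeeping needed to keep every shortened segment strictly inside $S$ (using the positive clearance and, where a waypoint would fall on $\partial C$, a small safety-preserving perturbation into the relative interior of a shared free face), is the part I expect to be the main obstacle.

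Finally, for each bound $k = 0, 1, \ldots, N$, the existence of a valid waypoint sequence is a closed first-order sentence over the ordered field of reals of the form $\exists X\, \forall Y\, F(X,Y)$: the existential block $X$ collects the coordinates of the $k$ waypoints, the universal variable $Y$ ranges over the segment parameter $\lambda \in [0,1]$, and the quantifier-free linear matrix $F$ asserts the cone memberships of the $k+1$ displacements and the obstacle-avoidance of each segment. By the Tarski--Seidenberg theorem this sentence is decidable; concretely, eliminating the universal $\lambda$ by Fourier--Motzkin elimination yields an existentially quantified (quadratic) constraint system dischargeable by an SMT solver such as Z3. Since only the finitely many bounds $k \leq N$ need to be tried, the overall procedure terminates, establishing decidability.
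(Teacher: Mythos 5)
Your proposal is correct in substance and follows the same overall architecture as the paper's proof --- decompose the free space into finitely many convex cells, bound the length of the cell sequence by the number of cells, and decide each bounded instance --- but the key lemmas are realized differently, and the comparison is instructive. Where you bound the sequence length by a combinatorial shortcutting argument (two visits to the same convex cell are spliced by a direct segment, using closure of $\mathrm{cone}(R)$ under addition), the paper instead observes that the trace of a safe run is compact, extracts a finite subcover from the collection of all cells, discards redundant cells, and orders the survivors by entry time (Lemmas~\ref{witness} and~\ref{generic-witness-to-c-witness}); both work, but the compactness route avoids re-verifying obstacle avoidance for spliced segments. More importantly, the paper works with a cell \emph{cover} --- overlapping open convex sets whose union is $S$ --- rather than a partition, and places each waypoint in the \emph{open} intersection $c_i \cap c_{i+1}$ of consecutive cells. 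This is precisely the device that dissolves the ``openness bookkeeping'' you flag as the main obstacle: with the partition induced by the hyperplane arrangement, the full-dimensional open cells do not cover the free portions of the bounding hyperplanes, a safe run may travel along such lower-dimensional faces, and your waypoints land on cell boundaries whose closures can meet the obstacles, which is why you are forced into the perturbation arguments you only sketch. The paper's Lemma~\ref{cell-decomposition-to-cover} handles exactly this by inflating each free lower-dimensional cell of the decomposition into a full-dimensional open convex set, so no boundary case survives. Finally, your per-instance decision procedure is heavier than necessary: because the paper's waypoints sit in open intersections of cells already known to be convex and obstacle-free, its witness condition is a pure linear program, whereas your segment-based condition introduces a universally quantified $\lambda$ and hence needs Tarski--Seidenberg or Fourier--Motzkin elimination (that $\exists\forall$ formulation is what the paper's Algorithm~\ref{alg:mainLoop} uses in practice, but its decidability proof does not require it). If you adopt the cell-cover formulation your argument closes cleanly; as written, the covering and boundary issues are acknowledged but not fully resolved.
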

For the rest of this section let us fix  a MMS $\Hh =
(M, n, R)$,
a start state $\px_s$ and a target state $\px_t$.  Before we prove
this theorem, we define cell cover (a notion related to, but distinct
from the one of cell decomposition introduced
in~\cite{latombe2012robot}).

\begin{definition}[Cell Cover]
  Given a safety set $S \in \Real^n$, a cell of $S$ is an open, convex
  set that is a subset of $S$.  A \emph{cell cover} of $S$ is a
  collection $\Cc = \set{c_1, \ldots, c_N}$ of cells whose union
  equals $S$.  Cells $c, c' \in \Cc$ are \emph{adjacent} if and only
  if $c \cap c'$ is non-empty.
\end{definition}

A \emph{channel} in $S$ is a finite sequence
$\seq{c_1, c_2, \ldots, c_N}$ of cells of $S$ such that $c_i$ and
$c_{i+1}$ are adjacent for all $1 \leq i < N$.  It follows that
$\cup_{1 \leq i \leq N} c_i$ is a path-connected open set.
A $\Cc$-channel is a channel whose cells are in cell cover
$\Cc$.

Given a channel $\pi = \seq{c_1, \ldots, c_N}$, a multi-mode
system $\Hh = (M, n, R)$, start and target states
$\px_s, \px_t \in S$, we say that $\pi$ is a \emph{witness} to
reachability if the following linear program is feasible:
\begin{gather}
  \bigexists_{0 \leq i \leq N} \px_i  \,.
  \Big( \px_s = \px_0  \wedge  \px_t = \px_N \Big) \wedge
  \Big(1 \leq i < N \rightarrow \px_i \in (c_i \cap c_{i+1})\Big)
  \,\wedge  \label{eq4} \\
  \bigexists_{1 \leq i \leq N, m \in M} t_i^{(m)} \,. \Big(t_i^{(m)}
  \geq 0\Big) \wedge \bigwedge_{1
    \leq i \leq N} \Big(\px_i = \px_{i-1} + \sum_{m\in M} R(m) \cdot  
    t_i^{(m)}\Big) \enspace.\nonumber
\end{gather}

\begin{lemma}
  \label{witness}
  If $S$ is an open safety set, there exists a finite $S$-safe
  schedule that solves $\Reach(\Hh, S, \px_s, \px_t)$ if and only if
  $S$ contains a witness channel $\seq{c_1, c_2, \ldots, c_N}$ for some
  $N \in \mathbb{N}$.
\end{lemma}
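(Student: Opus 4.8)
The plan is to prove both directions of the equivalence, invoking Theorem~\ref{thmcms} for the ``if'' direction and a compactness argument for the ``only if'' direction.

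\emph{Witness channel $\Rightarrow$ safe schedule.} Suppose the program~\eqref{eq4} is feasible for a channel $\seq{c_1, \ldots, c_N}$, yielding points $\px_0 = \px_s, \px_1, \ldots, \px_N = \px_t$ and nonnegative times $t_i^{(m)}$. First I would observe that for each segment $i$ both endpoints lie in $c_i$: one has $\px_i \in c_i \cap c_{i+1} \subseteq c_i$ for $1 \le i < N$, $\px_{i-1} \in c_{i-1} \cap c_i \subseteq c_i$ for $i \ge 2$, and $\px_s \in c_1$, $\px_t \in c_N$ (the natural endpoint-membership conditions, which we read as part of the channel being a witness). Since $c_i$ is open and convex and the constraint $\px_i = \px_{i-1} + \sum_{m} R(m)\, t_i^{(m)}$ with $t_i^{(m)} \ge 0$ is exactly the hypothesis of Theorem~\ref{thmcms}, that theorem supplies a $c_i$-safe schedule $\sigma_i$ steering the system from $\px_{i-1}$ to $\px_i$. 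Concatenating $\sigma_1, \ldots, \sigma_N$ yields a schedule from $\px_s$ to $\px_t$, and since each sub-run remains inside $c_i \subseteq S$, the concatenated run is $S$-safe.

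\emph{Safe schedule $\Rightarrow$ witness channel.} Conversely, let $\sigma$ be a finite $S$-safe schedule and let $\gamma : [0,T] \to S$ be the continuous, piecewise-linear trajectory traced by $\RUN(\px_s, \sigma)$. The image $\gamma([0,T])$ is compact and lies in the open set $S$, so around every $p$ on the trajectory there is an open ball $\ball{r_p}{p} \subseteq S$, and these open (hence convex) balls are cells of $S$. By compactness this cover admits a Lebesgue number $\delta > 0$. I would then pick breakpoints $0 = s_0 < s_1 < \cdots < s_N = T$ fine enough that each piece $\gamma([s_{i-1}, s_i])$ has diameter below $\delta$, choose a cell $c_i$ containing that piece, and set $\px_i = \gamma(s_i)$. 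Then $\px_{i-1}, \px_i \in c_i$, so $\px_i \in c_i \cap c_{i+1}$ for $1 \le i < N$, and $\seq{c_1, \ldots, c_N}$ is a channel with $\px_s \in c_1$, $\px_t \in c_N$.

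To close this direction I would check feasibility of~\eqref{eq4}. Over $[s_{i-1}, s_i]$ the run spends a nonnegative total time $\tau_i^{(m)}$ in each mode $m$ (splitting a timed action if a breakpoint falls inside one), and the run's defining equations give $\px_i - \px_{i-1} = \sum_{m} R(m)\, \tau_i^{(m)}$; taking $t_i^{(m)} = \tau_i^{(m)} \ge 0$ satisfies every conjunct, so the channel is a witness. I expect this compactness step to be the main obstacle: one must verify that the trajectory of a finite safe run, although it may cross mode and obstacle boundaries arbitrarily often, can always be threaded through finitely many overlapping open convex cells, with breakpoints placed on the trajectory so that the per-segment constraints of~\eqref{eq4} are met by the very times recorded in the run. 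Openness of $S$ is used essentially in both directions: to produce the balls forming the cover, and (through Theorem~\ref{thmcms}, whose hypothesis is an open convex set) to realize each segment as a safe sub-schedule.
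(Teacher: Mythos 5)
Your proof is correct. The ($\Leftarrow$) direction is essentially the paper's: both endpoints of each segment lie in the open convex cell $c_i$, Theorem~\ref{thmcms} supplies a $c_i$-safe sub-schedule realizing the LP times, and concatenation gives an $S$-safe schedule (you are also right that $\px_s \in c_1$ and $\px_t \in c_N$ must be read into the definition of witness; the paper uses this implicitly as well). The ($\Rightarrow$) direction, however, takes a genuinely different route. The paper covers the compact trajectory $P$ by cells, extracts a finite subcover, then massages it into a channel by ordering cells according to entry time, arguing that a cell entered more than once can be handled by replacing the portion of the run between its first entry and last exit with a run confined to that cell (an appeal to Theorem~\ref{thmcms} that the paper leaves implicit), and discarding redundant cells entered simultaneously. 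You instead cover $P$ by open balls contained in $S$, invoke the Lebesgue number lemma, and subdivide $[0,T]$ so finely that each piece of the trajectory sits inside a single ball; the breakpoints $\px_i = \gamma(s_i)$ then witness adjacency, and the per-mode dwell times on each subinterval directly satisfy the LP. Your argument is more elementary and self-contained --- it avoids the delicate re-entry/redundancy surgery entirely --- but it may produce a channel with many repeated cells and a length unrelated to any cell cover of $S$; the paper's construction yields an irredundant subcover entered in temporal order, which is the form reused in Lemma~\ref{generic-witness-to-c-witness} and ultimately in bounding the search depth of Algorithm~\ref{alg:mainLoop}. For the equivalence as stated (existence of a witness channel for \emph{some} $N$), both arguments are sound.
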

\begin{proof}
  ($\Leftarrow$) If $\seq{c_1, c_2, \ldots, c_N}$ is a witness channel,
  then for $0 < i \leq N$, $\px_{i-1}$ and $\px_i$ are in $c_i$.
  Theorem~\ref{thmcms} guarantees the existence of a $c_i$-safe
  schedule for each $i$.  The concatenation of these schedules is a
  solution to $\Reach(\Hh, S, \px_s, \px_t)$.

  ($\Rightarrow$)
  The run of a finite schedule that solves
  $\Reach(\Hh, S, \px_s, \px_t)$ defines a closed, bounded subset $P$ of $S$.
  Since $S$ is open, every point $x \in P$ is contained in a cell
  of $S$.
  Collectively, these cells form an open cover of $P$.
  By compactness, then, there is a finite subcover of $P$.
  If any element of the subcover is entered by the run more than once, there
  exists another run that is contained in that cell between the first entry and
  the last exit.
  For such a run, if two elements of the subcover are entered at the same time,
  the one with the earlier exit time is redundant.
  Therefore, there is a subcover in which no two elements are entered by
  the run of the schedule at the same time.
  This subcover can be ordered according to the time at which the run enters
  each cell to produce a sequence that satisfies the definition of witness
  channel. 
  \qed
\end{proof}

\begin{lemma}
  \label{generic-witness-to-c-witness}
  If $S$ is an open safety set and $\Cc$ a cell cover of
  $S$, there exists a witness channel for $\Reach(\Hh, S, \px_s, \px_t)$
  iff there exists a witness $\Cc$-channel.
\end{lemma}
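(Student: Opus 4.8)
My plan is to prove the two implications separately, observing that one of them is immediate and the other reduces to the compactness argument already developed for Lemma~\ref{witness}.

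The direction from a witness $\Cc$-channel to a witness channel requires no work: a $\Cc$-channel is by definition a channel all of whose cells belong to $\Cc$, and every member of $\Cc$ is a cell of $S$, so such a channel is already a witness channel in the unrestricted sense. The content of the lemma therefore lies in the converse. Here I would first apply the ($\Leftarrow$) direction of Lemma~\ref{witness}: given a witness channel, concatenating the per-cell schedules furnished by Theorem~\ref{thmcms} produces a finite $S$-safe schedule solving $\Reach(\Hh, S, \px_s, \px_t)$. The run of that schedule is a finite concatenation of line segments, hence a closed, bounded, and therefore compact, path $P \subseteq S$.

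I would then replay the ($\Rightarrow$) argument of Lemma~\ref{witness}, but drawing the covering cells from the fixed cover $\Cc$ rather than choosing an ad hoc cell of $S$ around each point of $P$. Because $\Cc$ covers $S$ and $P \subseteq S$, the members of $\Cc$ meeting $P$ form an open cover of $P$, and compactness yields a finite subcover. Rerouting the run to avoid re-entering any cell and discarding cells that are entered simultaneously but exited earlier, exactly as in Lemma~\ref{witness}, leaves a collection of cells of $\Cc$ that can be linearly ordered by the distinct times at which the run enters them; this ordered sequence is the candidate witness $\Cc$-channel.

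The step that needs genuine care --- the only one that is not purely formal --- is checking that this ordered subcover actually satisfies the witness linear program~\eqref{eq4}. Adjacency of consecutive cells follows from continuity of the run, which forces the transition point between two successive cells to lie in their intersection, and this transition point is the natural choice for the intermediate variable $\px_i$. The reachability constraints are then met because the transition points all lie on the original run, so the displacement $\px_i - \px_{i-1}$ between consecutive ones equals the sum of the mode contributions $\sum_{m} R(m)\, t_i^{(m)}$ accumulated along the run between them, with each $t_i^{(m)} \geq 0$. Since the bookkeeping of Lemma~\ref{witness} never used any property of the covering cells beyond their being open, convex subsets of $S$, it transfers verbatim to the cover taken from $\Cc$, and the resulting channel is the desired witness $\Cc$-channel.
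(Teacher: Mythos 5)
Your proof is correct and follows essentially the same route as the paper's: the forward direction is immediate, and the converse extracts the schedule via Lemma~\ref{witness}, covers its compact trace with cells of $\Cc$, prunes to an irredundant finite subcover with distinct entry times, and orders by entry time. Your additional paragraph verifying that the ordered subcover satisfies the linear program~\eqref{eq4} spells out details the paper leaves implicit, but does not change the argument.
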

\begin{proof}
  One direction is obvious.  Suppose therefore that there exists a
  witness channel; let $\sigma$ be the finite schedule whose existence is
  guaranteed by Lemma~\ref{witness}.  The path that is traced in the MMS $\Hh$ 
  when steered by $\sigma$ is a bounded closed subset $P$ of $S$
  because it is the continuous image of a compact interval of the real
  line.  (The time interval in which $\Hh$ moves from $\px_s$ to
  $\px_t$.)  Since $\Cc$ is an open cover of $P$, there exists a
  finite subset of $\Cc$ that covers $P$; specifically, there is an
  irredundant finite subcover such that no two cells are entered at
  the same time during the run of $\sigma$.  This subcover can be
  ordered according to entry time to produce a sequence of cells that
  satisfies the definition of witness channel. \qed
\end{proof}

\begin{lemma}
\label{cell-decomposition-to-cover}
  If $\Oo$ is a finite set of closed polytopes, then a finite cell
  cover of the safety set $S$ is computable.
\end{lemma}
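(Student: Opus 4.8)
The plan is to build the cover from the arrangement of all hyperplanes that bound the obstacles. First I would collect the finite set of (rational) hyperplanes $h_1,\ldots,h_m$ appearing in the linear descriptions of $\Oo_1,\ldots,\Oo_k$, oriented so that each obstacle can be written $\Oo_i = \set{x : a_j \dotprod x \le b_j,\ j \in J_i}$ for some index set $J_i \subseteq \set{1,\ldots,m}$. Since each $\Oo_i$ is closed and there are finitely many of them, $\bigcup_i \Oo_i$ is closed and hence $S = \Real^n \setminus \bigcup_i \Oo_i$ is open, consistent with the hypotheses of the preceding lemmas. To each point $x$ I associate its sign vector $\sigma(x) \in \set{+,0,-}^m$, where $\sigma_j(x)$ records the sign of $a_j \dotprod x - b_j$; the nonempty level sets of $\sigma$ are the relatively open convex faces of the arrangement, and there are at most $3^m$ of them.

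The crucial observation is that each such face is either entirely inside $S$ or entirely disjoint from $S$, because the sign of every defining inequality is constant on a face, so membership in each $\Oo_i$ is constant there. Concretely, a face with sign vector $\sigma$ lies in $S$ if and only if, for every obstacle $i$, some defining inequality is strictly violated, i.e.\ there is $j \in J_i$ with $\sigma_j = +$; this is a purely combinatorial test. Thus $S$ is exactly the union of the faces whose sign vectors pass this test.

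It remains to cover each such face by a single open convex cell inside $S$, and this is the step I expect to be the only real obstacle: a face lying on some hyperplanes is not itself open and may be unbounded, so no uniform ball around it need lie in $S$. The trick is to \emph{thicken} the face by discarding exactly its equality constraints. Given a sign vector $\sigma$ whose face $F$ satisfies $F \subseteq S$, define
\[
  U_\sigma = \set{ x : a_j \dotprod x > b_j \text{ for } \sigma_j = +,\ \ a_j \dotprod x < b_j \text{ for } \sigma_j = - }.
\]
Then $U_\sigma$ is an intersection of finitely many open half-spaces, hence open and convex, and $F \subseteq U_\sigma$ because the strict constraints already hold throughout $F$. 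To see $U_\sigma \subseteq S$, fix an obstacle $i$ and pick $j(i) \in J_i$ with $\sigma_{j(i)} = +$, which exists precisely because $F \subseteq S$; every point of $U_\sigma$ then satisfies $a_{j(i)} \dotprod x > b_{j(i)}$ and so lies outside $\Oo_i$. As this holds for all $i$, we conclude $U_\sigma \subseteq S$.

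Finally I would assemble the cover and argue computability. The collection $\Cc = \set{ U_\sigma : \sigma \text{ realizable and } F \subseteq S }$ is finite, each member is a cell, and since every face contained in $S$ sits inside its corresponding $U_\sigma \subseteq S$, the union equals $S$. Every step is effective over the rationals: the hyperplanes are read off the input, the finitely many sign vectors are enumerated, realizability of each is an exact linear-feasibility check, the test $F \subseteq S$ is combinatorial, and each $U_\sigma$ is output explicitly as a system of strict rational inequalities. Hence a finite cell cover of $S$ is computable.
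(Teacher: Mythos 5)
Your proof is correct, but it takes a genuinely different route from the paper. The paper invokes the vertical (cylindrical) decomposition algorithm of Latombe to get a cell \emph{decomposition}, and then repairs the lower-dimensional cells $C$ (which are convex but not open) by replacing each with the convex hull of the vertices of $C$ together with those of a small $n$-dimensional box around a point of $C$ lying in $S$. You instead work directly with the arrangement of the bounding hyperplanes: you partition $\Real^n$ into the at most $3^m$ sign-vector faces, observe that membership in each closed obstacle is constant on a face so that $S$ is a union of faces detectable by a purely combinatorial test, and then \emph{thicken} each face $F \subseteq S$ to the open convex set $U_\sigma$ obtained by discarding its equality constraints, with the containment $U_\sigma \subseteq S$ witnessed by one strictly violated inequality per obstacle. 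Both proofs confront the same essential difficulty --- any convex decomposition of the open set $S$ produces lower-dimensional pieces that are not themselves cells --- but your repair step is more explicit and, frankly, easier to verify: the paper's ``convex hull of vertices'' construction is stated only loosely (a convex hull of finitely many points is closed rather than open, and unbounded cells have no finite vertex description), whereas your $U_\sigma$ is manifestly an open convex polyhedron given by strict rational inequalities, and the argument that it avoids every obstacle is airtight. The price you pay is a potentially exponential number of cells ($3^m$ sign vectors) where the off-the-shelf decomposition may produce fewer, but since the lemma only claims computability of \emph{some} finite cell cover, this costs nothing. Your approach is also fully self-contained, requiring no external decomposition algorithm.
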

\begin{proof}
  If $\Oo$ is a finite set of closed polytopes, one can apply the the
  vertical decomposition algorithm of \cite{latombe2012robot} to
  produce a cell \emph{decomposition}.  Each cell $C$ in this
  decomposition of dimension less than $n$ that is not contained in
  the obstacles (and hence is entirely contained in $S$) is replaced
  by a convex open set obtained as follows.  Let $B$ be an
  $n$-dimensional box around a point of $C$ that is in $S$.  The
  desired set is the convex hull of the set of vertices of either $C$
  or $B$. \qed
\end{proof}

\begin{proof}[of Theorem~\ref{thm:dec}]
  Lemmas~\ref{witness}--\ref{generic-witness-to-c-witness}
  imply that $\Reach(\Hh, S, \px_s, \px_t)$ is decidable if a finite
  cell cover of $S$ is available.  If $\Oo$ is given as a finite set
  of closed polytopes, each presented as a set of linear inequalities,
  then Lemma~\ref{cell-decomposition-to-cover} applies.  \qed
\end{proof}

\begin{algorithm}[t]
  \KwIn{MMS $\Hh = (M, n, R)$, two points $\px_s, \px_t$, workspace $\Ww$,
    obstacle set $\Oo$, and an upper bound $B$ on number of cells in a cell-cover. }
  \KwOut{NO, if no safe schedule exists and otherwise
    such a schedule.}
  \BlankLine
  $k \leftarrow 0$;
\While{$k \leq B $}{
    Check if the following formula is satisfiable:
    \begin{eqnarray*}
     \bigexists\limits_{1 \leq i \leq N} \px_i && \bigexists\limits_{1 \leq i \leq N, m \in M}
     t_i^{(m)} \text{ s.t. }
      \left( \px_s = \px_1  \wedge  \px_t = \px_N \right) \wedge   
      \bigwedge\limits_{\stackrel{1\leq i \leq N}{m \in M}}  t_i^{(m)} \geq 0  \wedge  \\
      & 
      \bigwedge\limits_{i=2}^{N} & \left(\px_i = \px_{i-1} + \sum_{m\in M} R(m) \cdot
      t_i^{(m)}\right)  \wedge 
       \bigwedge\limits_{i=2}^{N} \textsc{ObstacleFree}(\px_{i-1}, \px_{i})
    \end{eqnarray*}
    \lIf{not satisfiable}{
      $k \leftarrow k+1$
    }
    \uElse
        {
          Let $\sigma$ be an empty sequence\;
          \For{$i = 1$ to $k-1$ }{
            $\sigma = \sigma :: \textsc{Reach\_Convex}(\Hh, \px_i, \px_{i+1}, S)$}
        }
        \Return $\sigma$\;
  }
  \caption{\textsc{BoundedMotionPlan}($\Hh, \Ww, \Oo, \px_s, \px_t, B$)}
  \label{alg:mainLoop}
\end{algorithm}

\begin{algorithm}[h!]
\caption{\label{algo:mms} $\textsc{Reach\_Convex}(\Hh, \px_s, \px_t, S)$}
\KwIn{MMS $\Hh = (M, n, R)$, two points $\px_s, \px_t$, convex, open, safety set $S$} 
\KwOut{NO if no $S$-safe schedule from $\px_s$ to $\px_t$ exists and otherwise
  such a schedule.}
$t_1 =
\min\limits_{m \in M} \max \set{\tau \::\: \px_s + \tau \cdot R(m) \in S}$\;
$t_2 =
\min\limits_{m \in M} \max \set{\tau \::\: \px_t + \tau \cdot R(m) \in S}$\;
$t_\text{safe} = \min \set{t_1, t_2}$\;
Check whether the following linear program is feasible: 
\begin{align}
  \px_s + \sum_{m\in M} R(m) \cdot t^{(m)} = \px_t
  & \text{ and } &
  t^{(m)} \geq 0 \text{ for all $m \in M$}~\label{eqn3}
\end{align}

\lIf{no satisfying assignment exists}{\Return NO}
\Else{
  Find an assignment $\{t^{(m)}\}_{m\in M}$.

  Set $l = \lceil (\sum_{m \in M} t^{(m)})/t_\text{safe}\rceil $.

  {\bf return} the following schedule  $\seq{(m_i, t_i)}$ where 
  \[
  m_k = (k \bmod |M|) + 1 \text{  and } t_k = t^{(m_k)} / l
  \text{  for $k = 1, 2, \ldots, l|M|$.}
  \]
}
\end{algorithm}

The algorithm implicit in the proof of Theorem~\ref{thm:dec} requires
one to compute the cell cover in advance, and enumerate
sequences of cells in order to decide reachability.  We next present
an algorithm inspired by bounded model
checking~\cite{clarke2001bounded} that implicitly enumerates sequences
of cells of increasing length till the upper bound on number of cells
is reached, or a safe schedule from the source point to the target
point is discovered.  The key idea is to guess a sequence of points
${\px_1, \ldots, \px_N}$ starting from the source point and ending in
the target point such that for every $1 \leq i < N$ the point
$\px_{i+1}$ is reachable from $\px_i$ using rates provided by the
multi-mode system.  Moreover, we need to check that the line segment
connecting $\px_i$ and $\px_{i+1}$ does not intersect with obstacles,
i.e:
$\bigforall_{0 \leq \lambda \leq 1} (\lambda \px_i + (1-\lambda) \px_{i+1}) \not \in \cup_{j = 1}^{k} \Oo_j$.
We write $\textsc{ObstacleFree}(\px_i, \px_{i+1})$ for this condition.
Algorithm~\ref{alg:mainLoop} sketches a bounded-step algorithm to
decide reachability for multi-mode systems that always terminates for
multi-mode systems with sets of closed obstacles defined by linear
inequalities thanks to Theorem~\ref{thm:dec}.

Notice that at line~$2$ of algorithm \ref{alg:mainLoop}, we need to check the feasibility of
the constraints system, which is of the form $\exists X \forall Y F(X, Y)$ where
universal quantifications are implicit in the test for $\textsc{ObstacleFree}$.
If the solver we use to solve the constraints has full support to solve the
$\forall$ quantification, we can use that to solve the above constraint. In our
experiments, we used the Z3 solver (\url{https://github.com/Z3Prover/z3})
to implement the Algorithm~\ref{alg:mainLoop} and found that the
solver was unable to solve in some cases.
Fortunately, the universal quantification in our constraints is of very special
form and can be easily removed using the Fourier-Motzkin elimination procedure,
which results in quadratic constraints that are efficiently solvable by Z3
solver.
In Section~\ref{sec:experiments} we present the experimental results on some
benchmarks to demonstrate scalability. 

\section{Undecidability}
\label{sec:undec}
In this section we give a sketch of the proof of the following undecidability
result. 
\begin{theorem}
  \label{thm:undec}
  Given a constant-rate multi-mode system $\Hh$, convex workspace
  $\Ww$, obstacles set $\Oo$, start state $\px_s$ and target state $\px_t$, 
  the reachability problem $\Reach(\Hh, S_{\Ww \backslash \Oo}, \px_s, \px_t)$ is 
  in general undecidable. 
\end{theorem}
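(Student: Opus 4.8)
The plan is to prove undecidability by a reduction from the halting problem for two-counter (Minsky) machines, which is well known to be undecidable. Given a two-counter machine $\mathcal{M}$ with a finite list of labelled instructions (increment a counter, decrement-or-branch-on-zero, and halt), I would construct in a uniform way an MMS $\Hh = (M, n, R)$, a convex workspace $\Ww = \Real^n$, a finite family of convex --- crucially, \emph{not necessarily closed} --- polytopal obstacles $\Oo$, and two points $\px_s, \px_t$, so that $\px_t$ is $S_{\Ww \setminus \Oo}$-safe reachable from $\px_s$ if and only if $\mathcal{M}$ halts. The design principle is that the free choice of modes available in an MMS is, by itself, no help: Theorem~\ref{thmcms} only guarantees reachability inside a \emph{convex} region, so on the non-convex safety set $S = \Ww \setminus \Oo$ all of the control flow of $\mathcal{M}$ must instead be forced geometrically by the obstacles. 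The construction is in the spirit of the undecidability proofs for piecewise-constant-derivative systems~\cite{AMP95}, and it is precisely the use of non-closed obstacles that evades the decidable closed-polytope case of Theorem~\ref{thm:dec}.

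The encoding I would use keeps the two counter values $c_1, c_2$ in two continuous coordinates and the current instruction label in additional coordinates (for instance a discretised ``phase'' coordinate that advances by a fixed amount per simulated step). The rate vectors in $R$ are chosen so that running a single mode for the appropriate time realises an increment or a decrement of the relevant coordinate while the remaining coordinates track the program location. The heart of the construction is a collection of obstacle \emph{gadgets}, one per instruction, arranged so that the only way a safe trajectory can pass from the region encoding ``about to execute instruction $\ell$'' into the region encoding its successor is to perform exactly the update prescribed by $\ell$. The essential device is that, because an \emph{open} obstacle is removed from $\Ww$, its boundary faces remain inside $S$; by sandwiching a passage between two open half-space obstacles I can force a trajectory to travel along a prescribed lower-dimensional affine slice, which pins a coordinate to an exact value and thereby enforces the exact integer increments and decrements --- and, most importantly, the exact zero-tests.

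The step I expect to be the main obstacle is the faithful simulation of the \emph{decrement-or-zero-test} instruction, since the zero test is exactly what makes two-counter machines Turing-complete. I would build a branching gadget in which the trajectory reaching the test can continue along the ``zero'' exit only if the tested counter coordinate is exactly at its zero value, and can continue along the ``nonzero'' exit (after decrementing) only if that coordinate is strictly positive, with obstacles separating the two exits so that no safe trajectory can take the wrong branch. Correctness then splits into two directions. The forward direction --- a halting run of $\mathcal{M}$ yields a safe schedule --- is routine: follow the gadgets in order, spend in each mode exactly the time dictated by the instruction, and verify the resulting piecewise-linear path stays in $S$. The reverse direction is the delicate one: I must argue that \emph{every} $S$-safe schedule from $\px_s$ to $\px_t$ is forced by the geometry to decompose into gadget traversals, each performing a legal instruction on integer-valued counters, so that the trajectory projects to a genuine halting computation of $\mathcal{M}$. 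Concretely this means verifying that the obstacle arrangement leaves no ``shortcut'': that the non-closed obstacles really do confine any safe path to the intended corridors and affine slices, ruling out fractional counter values, skipped instructions, and illegal branches. Once both directions are established, deciding $\Reach(\Hh, S_{\Ww \setminus \Oo}, \px_s, \px_t)$ would decide halting of $\mathcal{M}$, and undecidability follows.
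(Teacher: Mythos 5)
Your proposal is correct and follows essentially the same route as the paper: a reduction from the halting problem for two-counter machines in which the counters live in two coordinates, the program counter is tracked by auxiliary coordinates, and non-closed convex polytopal obstacles (whose complement pins any safe trajectory to prescribed affine slices) force exact unit-time mode switches, the instruction order, and the zero-tests. The paper instantiates your ``gadgets'' concretely via a one-hot encoding with variables $w_{ij}, x_{ij}, z_{i\#}$ and mutual-exclusion/sum constraints such as $x_{i\,halt}>0 \Rightarrow x_{i\,halt}+w_{halt}=1$, which is exactly your ``sandwich between two open half-spaces'' device; the only detail I would adjust is your parenthetical monotonically advancing phase coordinate, since the target must be a fixed point independent of the (unknown) length of the halting run, which is why the paper drains all location variables back to zero except a single $w_{halt}=1$.
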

\begin{proof}{(Sketch.)}
We prove the undecidability of this problem by giving a reduction from
the halting problem for two-counter machines that is known to be
undecidable~\cite{Min67}.
Given a two counter machine $\Aa$ having instructions $L = \ell_1$, $\dots$,
$\ell_{n-1}, \ell_{halt}$,  we construct a multi-mode system $\Hh_\Aa$ along
with  non-convex safety $S_{\Ww \backslash \Oo}$ characterized using linear constraints.
The idea is to simulate the unique run of two-counter machine $\Aa$ via the unique
safe schedule of the MMS $\Hh_\Aa$ by going through a sequence of modes such
that a pre-specified target point is reachable iff the counter machine halts.  

\noindent \textbf{Modes}. For every increment/decrement instruction $\ell_i$ of the counter machine we
have two modes $\Mm_i$ and $\Mm_{ik}$, where  $k$ is the index of the unique
instruction $\ell_k$ to which the control shifts in $\Aa$ from $\ell_i$.
For every zero check instruction $\ell_i$,  we have four modes $\Mm^1_i,
\Mm^2_i, \Mm_{ik}$ and $\Mm_{im}$, where  $k,m$ are respectively the indices of
the unique instructions $\ell_k, \ell_m$ to which the control shifts from
$\ell_i$ depending on whether the counter value is $>0$ or $=0$. 
There are three modes $\Mm_{halt}, \Mm^{c_1}_{halt}$ and $\Mm^{c_2}_{halt}$ 
corresponding to the halt instruction. 
We have a special ``initial'' mode $\mathcal{I}$ which is the first mode to be
applied in any safe schedule.

\noindent \textbf{Variables}. The MMS $\Hh_\Aa$ has two variables $C=\{c_1, c_2\}$ that store the value
of two counters.  
There is a unique variable $S=\{s_0\}$ used to enforce that mode  $\mathcal{I}$
as the first mode. 
For every increment or decrement instruction $\ell_i$, 
there are variables $w_{ij}, x_{ij}$, where $j$ is the index of the unique
instruction 
$\ell_j$ to which control shifts from $\ell_i$. 
We define variable $z_{i\#}$ for each zero-check instruction $\ell_i$. 

\noindent \textbf{Simulation}.   A simulation of the two counter machine going through instructions 
$\ell_0$, $\ell_1$, $\ell_2$, $\dots, \ell_y, \ell_{halt}$ is achieved by going through modes
$\mathcal{I}, \Mm_{0}, \Mm_{01}, \Mm_1$ or $\Mm^1_1$ or $\Mm^2_1 \dots, \Mm_y, \Mm_{y~halt}$ in order, spending exactly one unit of time in each mode. 
Starting from a point $\px_s$ with $s_0=1$ and $v=0$ for all variables $v$ other than $s_0$, we 
want to reach a point $\px_t$  where $w_{halt}=1$ and 
 $v=0$ for all variables $v$ other than $w_{halt}$. 
The idea is to start in  mode $\mathcal{I}$, and  
spending one unit of time in  $\mathcal{I}$ obtaining $s_0=0, w_{01}=1$
(spending a time other than one violates safety, see Lemma~\ref{one}). 
Growing $w_{01}$ represents that the current instruction is $\ell_0$, and the next one is $\ell_1$. 
Next, we shift to mode $\Mm_0$, 
spend one unit of time there to obtain $x_{01}=1, w_{01}=0$.
This is followed by 
mode $\Mm_{01}$, where $x_{01}$ becomes 0, and one of the variables $z_{1\#}, w_{12}$ attain 1, depending on whether 
$\ell_1$ is a zero check instruction or not (again, spending a time other than
one in $\Mm_0, \Mm_{01}$ violates safety, see Lemma~\ref{two}).

In general, while at a  mode $\Mm_{ij}$, the next instruction 
$\ell_k$ after $\ell_j$ is chosen  by ``growing'' the variable $w_{jk}$ if $\ell_j$ is not a zero-check instruction, or 
by ``growing'' the  variable $z_{j\#}$  if $\ell_j$ is a zero-check instruction.
 In parallel, $x_{ij}$ grows down to 0, so that $x_{ij}+w_{jk}=1$ or $x_{ij}+z_{j\#}=1$. 
 The sequence of choosing modes, and enforcing that one unit of time be spent in each mode 
 is necessary to adhere to the safety set as can be seen by Lemmas~\ref{two} and~\ref{three}. 
 \begin{itemize}
\item  In the former case, the control shifts from $\Mm_{ij}$ to mode 
$\Mm_j$ where variable $x_{jk}$ grows at rate 1 while $w_{jk}$ grows at rate -1, so that 
 $x_{jk}+w_{jk}=1$. Control shifts from $\Mm_j$ to $\Mm_{jk}$, where the next instruction $\ell_g$ after $\ell_k$ is chosen 
 by growing  variable $w_{kg}$ if $\ell_k$ is not zero-check instruction, or 
 the variable $z_{k\#}$ is grown if $\ell_k$ is a zero-check instruction.  
  \item  In the latter case, one of the modes $\Mm^1_j,\Mm^2_j$ is chosen from $\Mm_j$ where 
   $z_{j\#}$ grows at rate -1. 
  Assume $\ell_j$ is the instruction ``If the counter value is $>0$, then goto $\ell_m$, else goto $\ell_h$".   
      If  $\Mm^1_j$ is chosen, then the variable $x_{jm}$ grows at rate 1
   while if $\Mm^2_j$ is chosen, then the variable $x_{jh}$ grows at rate 1. 
   In this case, we have $z_{j\#}+x_{jm}=1$ or $z_{j\#}+x_{jh}=1$. 
   From $\Mm^1_j$, control shifts to $\Mm_{jm}$, while from 
   $\Mm^2_j$, control shifts to $\Mm_{jh}$.
\end{itemize}
Continuing in the above fashion, we eventually reach mode $\Mm_{y~halt}$
where $x_{y~halt}$ grows down to 0, while the variable $w_{halt}$ grows to 1, so that 
$x_{y~halt}+w_{halt}=1$(see Lemma~\ref{last} which enforces this). 

Starting from $\px_s$---which lies in the hyperplane $H_0$
given as  $s_0+w_{0j}=1$  where $\ell_j$ is the unique instruction  following
$\ell_0$---a safe execution stays in $H_0$ as long as control stays in the initial mode
$\mathcal{I}$.  
Control then switches to mode $\Mm_0$,  to the hyperplane $H_1$ given by 
 $w_{0j}+x_{0j}=1$.  Note that $H_0 \cap H_1$ is non-empty and intersect at the
point   where $w_{0j}=1$, and all other variables are $0$.
Spending a unit of time at $\Mm_0$, control switches to mode $\Mm_{0j}$, and  
to the hyperplane $H_2$ given by $x_{0j}+w_{jk}=1$ depending on whether
$\ell_j$ is not a zero-check instruction.
Again, note that $H_1 \cap H_2$ is non-empty and intersect  
at the point where $c_1=1, x_{0j}=1$ and all other variables are zero.
This continues, and we obtain a safe transition from hyperplane $H_i$ to
$H_{i+1}$ as dictated by the simulation of the two counter machine.
The sequence of safe hyperplanes lead to the hyperplane $H_{last}$ given by
$w_{halt}=1$ and all other variables 0 iff the two counter machine halts.
Appendix \ref{app:undec-eg} 
gives an example of a reduction 
from 2-counter machines. 
\qed
\end{proof}

\section{Experimental Results}
\label{sec:experiments}
In this section, we discuss some preliminary results obtained with an
implementation of Algorithm~\ref{alg:mainLoop}.
In order to show competitiveness of the proposed algorithm, we compare its
performance with a popular implementation of the RRT algorithm~\cite{Lav06} on a
collection of micro-benchmarks (some of these benchmarks are inspired by \cite{saha}).

\subsection{Experimental Setup}
Rapidly-exploring Random
Tree (RRT)~\cite{Lav06} is a space-filling data structure that is used to search a
region by incrementally building a tree.
It is constructed by selecting random points in the state space and can provide
better coverage of reachable states of a system than mere simulations.
There are many versions of RRTs available;  we use the \emph{Open Motion Planning
Library (OMPL)} implementation of RRT for our experiments. The
OMPL  library (http://ompl.kavrakilab.org)
consists of many
state-of-the-art, sampling-based motion planning algorithms. We used the RRT API provided by the OMPL library.
The results for RRT were obtained with a goal bias parameter set to $0.05$, and obstacles implemented 
as \texttt{StateValidityCheckerFunction()} as mentioned in the documentation~\cite{sucan2012the-open-motion-planning-library}. 

We implemented our algorithm on the top of the Z3
solver~\cite{DeMoura:2008:ZES:1792734.1792766}. The
implementation involves coding formulae in FO-logic over reals and checking
for a satisfying assignment. Our algorithm was implemented in Python 2.7. The OMPL implementation was done in C++.  
The experiments with Algorithm~\ref{alg:mainLoop}
and RRT were performed on a computer running Ubuntu 14.10,
with an Intel Core i7-4510 2.00 GHz quadcore CPU, with 8 GB RAM. 
We compared the two algorithms by executing them on a
set of microbenchmarks whose
 obstacles are hyper-rectangular, though our
algorithm can handle general polyhedral obstacles.
\begin{figure}[t] 
    \begin{center}
    \begin{tikzpicture}
      \draw (0,0) rectangle (7, 4);
      \filldraw[fill=black!40!white, draw=black] (1, 0) rectangle(2, 3.5);
      \filldraw[fill=black!40!white, draw=black] (2.5, 4) rectangle(3.5, 0.5);
      \filldraw[fill=black!40!white, draw=black] (4, 0) rectangle(5, 3.5);
      \filldraw[fill=black!40!white, draw=black] (5.5, 4) rectangle(6.5, 0.5);
      \node [fill, draw, circle, minimum width=3pt, inner sep=0pt, pin={[fill=white, outer sep=2pt]225:$\px_s$}] at (0.2,0.1) {}; 
      \node [fill, draw, circle, minimum width=3pt, inner sep=0pt, pin={[fill=white, outer sep=2pt]135:$\px_t$}] at (6.9,3.9) {}; 
      \draw [black!30, thick] (0.2, 0.1) -- (0.2, 3.9); 
      \node [fill, draw, circle, minimum width=3pt, inner sep=0pt, pin={[fill=white, outer sep=2pt]135:$\px_1$}] at (0.2,3.9) {};
      \draw [black!30, thick] (0.2, 3.9) -- (2.2, 3.9);
      \node [fill, draw, circle, minimum width=3pt, inner sep=0pt, pin={[fill=white, outer sep=2pt]135:$\px_2$}] at (2.2,3.9) {};
      \draw [black!30, thick] (2.2, 0.2) -- (2.2, 3.9);
      \node [fill, draw, circle, minimum width=3pt, inner sep=0pt, pin={[fill=white, outer sep=2pt]225:$\px_3$}] at (2.2,0.2) {};
      \draw [black!30, thick] (2.2, 0.2) -- (3.7, 0.2);
      \node [fill, draw, circle, minimum width=3pt, inner sep=0pt, pin={[fill=white, outer sep=2pt]225:$\px_4$}] at (3.7,0.2) {};
      \draw [black!30, thick] (3.7, 3.9) -- (3.7, 0.2);
      \node [fill, draw, circle, minimum width=3pt, inner sep=0pt, pin={[fill=white, outer sep=2pt]135:$\px_5$}] at (3.7,3.9) {};
      \draw [black!30, thick] (3.7, 3.9) -- (5.2, 3.9);
      \node [fill, draw, circle, minimum width=3pt, inner sep=0pt, pin={[fill=white, outer sep=2pt]135:$\px_6$}] at (5.2,3.9) {};
      \draw [black!30, thick] (5.2, 0.2) -- (5.2, 3.9);
      \node [fill, draw, circle, minimum width=3pt, inner sep=0pt, pin={[fill=white, outer sep=2pt]225:$\px_7$}] at (5.2,0.2) {};
      \draw [black!30, thick] (6.9, 0.2) -- (5.2, 0.2);
      \node [fill, draw, circle, minimum width=3pt, inner sep=0pt, pin={[fill=white, outer sep=2pt]225:$\px_8$}] at (6.9,0.2) {};
      \draw [black!30, thick] (6.9, 0.2) -- (6.9, 3.9);
      
      \node at (1.5, 1.75) {$\Oo_1$};
      \node at (3, 1.75) {$\Oo_2$};
      \node at (4.5, 1.75) {$\Oo_3$};
      \node at (6, 1.75) {$\Oo_4$};
    \end{tikzpicture}
\begin{tikzpicture}
\draw (0,0) rectangle (4, 4);

\filldraw[fill=black!40!white, draw=black](0.5, 3) rectangle (3.5, 3.5);
\filldraw[fill=black!40!white, draw=black](0.5, 1) rectangle (1, 3);
\filldraw[fill=black!40!white, draw=black](0.5, 0.5) rectangle (3.5, 1);
\filldraw[fill=black!40!white, draw=black](1.25, 1.25) rectangle (3.5, 1.5);
\filldraw[fill=black!40!white, draw=black](1.25, 2.5) rectangle (3.5, 2.75);
\filldraw[fill=black!40!white, draw=black](3.25, 1.5) rectangle (3.5, 2.5);
\filldraw[fill=black!40!white, draw=black](2, 1.65) rectangle (3, 1.85);
\filldraw[fill=black!40!white, draw=black](2, 2.15) rectangle (3, 2.3);
\filldraw[fill=black!40!white, draw=black](2, 1.85) rectangle (2.15, 2.15); 

\node [fill, draw, circle, minimum width=3pt, inner sep=0pt, pin={[fill=white, outer sep=2pt]225:$\px_s$}] at (0.1,0.1) {}; 
\draw [black!30, thick] (0.1, 0.1) -- (3.9, 0.1);
\node [fill, draw, circle, minimum width=3pt, inner sep=0pt, pin={[fill=white, outer sep=2pt]225:$\px_1$}] at (3.9,0.1) {}; 

\draw [black!30, thick] (3.9, 0.1) -- (3.9, 2.85);
\node [fill, draw, circle, minimum width=3pt, inner sep=0pt, pin={[fill=white, outer sep=2pt]355:$\px_2$}] at (3.9,2.85) {};
\draw [black!30, thick] (3.9, 2.85) -- (1.15, 2.85);
\node [fill, draw, circle, minimum width=3pt, inner sep=0pt, pin={[outer sep=-2pt]120:$\px_3$}] at (1.15,2.85) {};

\draw [black!30, thick] (1.15, 1.6) -- (3.075, 1.6);
\node [fill, draw, circle, minimum width=3pt, inner sep=0pt, pin={[outer sep=-3pt]225:$\px_4$}] at (1.15,1.6) {};

\draw [black!30, thick] (1.15, 1.6) -- (1.15, 2.85);
\node [fill, draw, circle, minimum width=3pt, inner sep=0pt, pin={[outer sep=-3pt]355:$\px_5$}] at (3.075,1.6) {};

\draw [black!30, thick] (3.075, 1.6) -- (3.075, 2);
\node [fill, draw, circle, minimum width=3pt, inner sep=0pt, pin={[outer sep=-3pt]355:$\px_6$}] at (3.075,2) {};

\draw [black!30, thick] (3.075, 2) -- (2.3, 2);
\node [fill, draw, circle, minimum width=3pt, inner sep=0pt, pin={[outer sep=-3pt]180:$\px_t$}] at (2.3,2) {};
\end{tikzpicture}
 \begin{tikzpicture}
  \draw (0,0) rectangle (4, 4);
  
  \filldraw[fill=black!40!white, draw=black] (0.15, 1) rectangle(3.75, 0.25);
  \filldraw[fill=black!40!white, draw=black] (3, 3.95) rectangle(3.75, 1.05);
  
  \node [fill, draw, circle, minimum width=3pt, inner sep=0pt, pin={[fill=white, outer sep=2pt]180:$\px_s$}] at (2.85,3.7) {}; 
  \node [fill, draw, circle, minimum width=3pt, inner sep=0pt, pin={[fill=white, outer sep=2pt]0:$\px_t$}] at (3.85,3.7) {};  
  
  \node at (2, .6) {$\Oo_1$};
  \node at (3.38, 2) {$\Oo_2$};
  \node at (0, -0.6) {};
  \node at (-1, 0) {};
 \end{tikzpicture}
\end{center}

\caption{a) Snake-shaped arena with four obstacles (left), b) Maze-shaped
  arena with three $C$-shaped patterns (middle) and c) modified L-shaped
  arena (right).}
  
\label{fig:modified_Lshaped}
\label{fig:maze-shaped} 
\end{figure}
\begin{table}[t]
\begin{center}
\begin{tabular}{ | c | c | c | c | c | c |}
\hline
\multirow{2}{*}{\textbf{Dimension}} & \multirow{2}{*}{\textbf{Arena Size}} &
\multicolumn{2}{| c |}{~~~~~\textbf{OMPLRRT}~~~~~} & \multicolumn{2}{| c |}{~~~~~\textbf{BoundedMotionPlan}~~~~~} \\
\cline{3-6}
 & & Time(s) & Nodes & Time(s) & Witness Length \\

\hline 
2 & 100 $\times$ 100 & 0.011 & 8 & 0.012 & 2 \\
\hline
2 & 1000 $\times$ 1000 & 0.076 & 245 & 0.012 & 2 \\
\hline 
3 & 100 $\times$ 100 &  0.107 & 4836 & 0.183 & 2 \\
\hline
3 & 1000 $\times$ 1000 & 1.9 & 1800 & 0.19 & 2 \\
\hline
4 & 100 $\times$ 100 & 1.2 & 612 & 0.201 & 2 \\ 
\hline 
4 & 1000 $\times$ 1000 & 94.39 & 2857 & 0.206 & 2 \\
\hline
5 & 100 $\times$ 100 & 3.12 & 778 & 2.69 & 2 \\
\hline
5 & 1000 $\times$ 1000 &  149.4 & 2079 & 2.68 & 2 \\
\hline
6 & 1000 $\times$ 1000 & 105 & 3822 & 15.3 & 2 \\
\hline
7 & 1000 $\times$ 1000 & 319.63 & 2639 & 190.3 & 2 \\
\hline  
\end{tabular}
\end{center}
\caption{Summary of results for the $L$ shaped arena}
\label{ellShaped}
\end{table}
We considered the following microbenchmarks.
\begin{itemize}
\item \textbf{L-shaped arena.}
      This class of microbenchmarks contains  examples with hyper-rectangular
  workspace and certain ``L'' shaped obstacles as shown in 
  Figure~\ref{fig:l-shaped}.
  The initial vertex is the lower left vertex of the square ($\px_s$) and the
  target is the right upper vertex of the square ($\px_t$).
  Our algorithm can give the solution to this problem with bound $B = 2$
  returning the sequence $\seq{\px_1, \px, \px_t}$ as shown in the figure,
  while the \textsc{Rrt} algorithm in this case samples most of the points which lie
  on the other side of the obstacles and if the control modes are not in the
  direction of the line segments $\px_1 \px$ and $\px\px_t$, 
  then it grows in arbitrary directions and hits the obstacles a large number of
  times, leading to a large number of iterations slowing the growth.
  We experimented with L-shaped examples for dimensions ranging from $2$ to
  $7$. In most of the cases, we found that the performance of
  the \textsc{BoundedMotionPlan} algorithm was better than that of
  OMPLRRT. Another important point to note is that RRT or other simulation-based
  algorithms do not perform well as the input size increases, which can be
  clearly seen from the running times obtained on increasing arena
  sizes in Table~\ref{ellShaped}. Our algorithm worked better than RRT
  for higher dimensions ($\geq$ 3).

\item \textbf{Snake-shaped arena.}
  The name comes from the serpentine appearance of the safe sets in these arenas.
  The motivation to study these microbenchmarks comes from motion planning
  problems in regular environments.
  The arena has rectangular obstacles coming from the top and the
  bottom (as shown in Figure~\ref{fig:maze-shaped} for two dimensions)
  alternately.
  The starting point is the lower left vertex $\px_s$ and the target point is
  $\px_t$.
  A sample free-path through the arena is also shown in the figure.
  \textsc{Rrt} algorithm performs well for lower dimensions but fails to
  terminate for higher dimensions. 
  The results for this class of obstacles are summarised in Table~\ref{snakeShaped}. Experiments
  were performed for up to 3 dimensions and 4 obstacles.   

\item \textbf{Maze-shaped arena.} 
  These benchmarks mimic the motion planning situations where the task of the
  robot is to navigate through a maze.
  We model a maze using finitely many concentric ``C''-shaped obstacles with
  different orientations as shown
  in Figure~\ref{fig:maze-shaped}.
  The task is to navigate from the lower left outer corner to the
  center point of the square.
  This kind of arena seems to be particularly challenging for the RRT algorithm
  and the growth of the tree seems to be quite slow.
  Also, the performance of our tool degrades as the bound increases due to a
  increase in the number of constraints, and hence, these examples
  require more time as compared to the other two microbenchmarks.
  However, as shown in Table~\ref{mazeShaped}, \textsc{OmplRrt} and
  \texttt{BoundedMotionPlan}
  perform almost equally well, with the latter being slightly better.
\item \textbf{Modified L-shaped obstacles.}
  These set of microbenchmarks contains a hyperrectangular workspace and 2
  hyperrectangular obstacles arranged in a ``L-shaped'' fashion as shown in
  Figure~\ref{fig:modified_Lshaped}. The initial vertex lies very close to one
  of the obstacles. The target vertex is the vertex very close to the start
  vertex but on the other side of the obstacle. Our algorithm can give the
  solution to this problem with bound $B = 3$ while \textsc{Rrt} algorithm
  spends time in sampling from the bigger obstacle-free part of the arena. The
  results are  summarised in Table~\ref{modifiedEll}. 
\end{itemize}

\begin{table}[t]
\begin{center}
\begin{tabular}{ | c | c | c | c | c | c | c | c |}
\hline
\multirow{2}{*}{\textbf{Dim.}} & \multirow{2}{*}{\textbf{Arena Size}} & \multirow{2}{*}{\textbf{Obstacles}} & \multicolumn{3}{| c |}{\textbf{OMPLRRT}} & \multicolumn{2}{| c |}{\textbf{BoundedMotionPlan}} \\ \cline{4-8}
 & & & Time(s) & Nodes &  Nodes in Path & Time(s) & Witness Length \\
 \hline
 2 & 350 $\times$ 350 & 3 & 3.56 & 13142 & 72 & 2.54 & 4 \\
 \hline 
 2 & 350 $\times$ 350 & 4 & 4.12 & 15432 & 96 & 4.23 & 5 \\
 \hline 
 2 & 3500 $\times$ 3500 & 3 & 4.79 & 15423 & 83 & 2.57 & 4 \\
 \hline
 3 & 350 $\times$ 350 & 3 & 102.3 & 86314 & 67 & 96.43 & 4 \\
 \hline
 3 & 3500 $\times$ 3500 & 3 & 100.22 & 1013 & 27 & 96.42 & 4 \\
 \hline 
\end{tabular}
\end{center}
\caption{Summary of results for the snake-shaped arena}
\label{snakeShaped}
\end{table} 

\begin{table}[t]
\begin{center}
\begin{tabular}{ | c | c | c | c | c | c | c | c | }
\hline
\multirow{2}{*}{\textbf{Dim.}} & \multirow{2}{*}{\textbf{Arena Size}} & \multirow{2}{*}{\textbf{Obstacles}} & \multicolumn{3}{| c |}{\textbf{OMPLRRT}} & \multicolumn{2}{| c |}{\textbf{BoundedMotionPlan}} \\ \cline{4-8}
& & & Time(s) & Nodes &  Nodes in Path & Time(s) & Bound \\
\hline
2 & $600 \times 600$ & 2 & 1.8 & 9500 & 60 & 1.3 & 4 \\
\hline
2 & $6000 \times 6000$ & 3 & 23.5 & 11256 & 78 & 45.23 & 5 \\
\hline
3 & $600 \times 600$ & 2 & 132.6 & 90408 & 71 & 120.3 & 5 \\
\hline
3 & $6000 \times 6000$ & 3 & 1002.6 & 183412 & 93 & 953.4 & 5 \\
\hline  
\end{tabular}
\end{center}
\caption{Summary of results for the maze-shaped arena}
\label{mazeShaped}
\end{table}

\begin{table}[h]
  \begin{center}
    \centering
\begin{tabular}{ | c | c | c | c | c | c | c | }
\hline
\textbf{Dimension } & \textbf{Arena Size} & \multicolumn{3}{| c |}{\textbf{OMPLRRT}} & \multicolumn{2}{| c |}{\textbf{BoundedMotionPlan}} \\
\hline
& & Time & Nodes &  Nodes in Path & Time & Bound \\
\hline
2 & $100 \times 100$ & 0.445 & 27387 & 40 & 0.126 & 3 \\
\hline
2 & $1000 \times 1000$ & 2.57 & 38612 & 47 & 0.132 & 3 \\
\hline
3 & $100 \times 100$ & 115.23 & 57645 & 71 & 92.1 & 3 \\
\hline
3 & $1000 \times 1000$ & 675.62 & 183412 & 93 & 95.23 & 3 \\
\hline
4 & $100 \times 100$ & 287.32 & 64230 & 65 & 283.23 & 3 \\
\hline
4 & $1000 \times 1000$ & 923.45 & 192453 & 78 & 292.53 & 3 \\
\hline
5 & $100 \times 100$ & 523.62 & 73422 & 69 & 534.45 & 3 \\
\hline
5 & $1000 \times 1000$ & 1043 & 223900 & 72 & 533.96 & 3\\
\hline  
\end{tabular}
\end{center}
\caption{Summary of results for the modified L-shaped obstacles}
\label{modifiedEll}
\end{table}

The micro-benchmarks presented above involved the situations where
the target point is reachable from the source point.
It is interesting to see the performance of two algorithms in cases when there
is no path from the source to target point.
For the cases when an upper bound on cell-decomposition can be imposed,
our algorithm is capable of producing negative answer.
Table~\ref{negtab} summarizes the performance of \textsc{OmplRrt}
and \texttt{BoundedMotionPlan} for $L$-shaped arenas when the target
point is not reachable.  The timeout for
RRT was set to be 500 seconds, and it did not terminate until the timeout,
which is as expected. On the other hand, \textsc{BoundedMotionPlan} performed well, with running times close to those when the target point is reachable.
\begin{table}[t]
\begin{center}
\centering
\begin{tabular} { | c | c | c | c |}
\hline 
\multirow{2}{*}{~~~~\textbf{Dimension}~~~~} & \multicolumn{2}{| c |}{~~~~~~~~~~\textbf{OMPLRRT}~~~~~~~~~} & {~~~~\textbf{BoundedMotionPlan}~~~~} \\ \cline{2-4}
& Time(s) & Nodes & Time(s) \\ \hline
2 & 500 (TO) & 5301778 & 0.0088 \\ \hline
3 & 500 (TO) & 7892122 & 0.032 \\ \hline
4 & 500 (TO) & 4325621 & 0.056 \\ \hline
5 & 500 (TO) & 5624609 & 2.73 \\ \hline
6 & 500 (TO) & 4992951 & 18.34 \\ \hline
7 & 500 (TO) & 3765123 & 213.23 \\ \hline

\end{tabular}
\end{center}
\caption{Summary of results for the unreachable L-shaped obstacles.}
\label{negtab}
\end{table}

\noindent\textbf{Discussion}.
Our implementation of \textsc{BoundedMotionPlan} even though
preliminary, compares favorably with a state-of-the-art implementation
of \textsc{Rrt}. \textsc{BoundedMotionPlan}, in addition, can naturally deal
with restrictions on the dynamics of the MMS, that is, with systems
such that the positive linear span of the mode vectors is not
$\Real^n$.

A trend observed in our experiments is that if a large fraction of the
arena is covered by obstacles, then the probability of a randomly sampled
point lying in the obstacle region is high and this makes RRT ineffective in
this situation by wasting a lot of iterations.
Another trend is that as the arena size increases, it becomes more
difficult for RRT to navigate to the destination points even with higher values
of goal bias.

Our algorithm performs better in situations when it terminates early
(target reachable from source with shorter witnesses) while
the performance of our algorithm degrades as the bound or the
dimensions increases since the number of constraints introduced by the
Fourier-Motzkin like-procedure implemented in our algorithm grows
exponentially with the dimension exhibiting the curse of
dimensionality.

\section{Conclusion}
\label{sec:conclusion}
In this paper we studied the motion planning problem for constant-rate
multi-mode system with non-convex safety sets given as a convex set of
obstacles.
We showed that while the general problem is already undecidable in this simple
setting of linearly defined obstacles, decidability can be recovered by making
appropriate assumption on the obstacles. 
Moreover, our algorithm performs satisfactorily when compared to well-known
algorithms for motion planning, and can easily be adapted to provide
semi-algorithms for motion-planning problems for objects with polyhedral
shapes.
While the algorithm is complete for classes of safety sets for which a
bound on the size of a cell cover can be effectively computed,
bounds based on cell decompositions of the safety set may be too large to
be of practical use.  This situation is akin to that encountered in
bounded model checking of finite-state systems, in which bounds based
on the radii of the the state graph are usually too large.  We are
therefore motivated to look at extensions of the algorithm that
incorporate practical termination checks.

\newpage
\appendix
\section{Proof of Theorem~\ref{thm:undec}} 
\label{app:undec}
In this section we present details of the proof of our undecidability theorem.
We show undecidability of the motion planning problem by giving a reduction from
the undecidable halting problem for two-counter machines.

\begin{definition}
A \emph{two-counter machine} (Minsky machine) $\Aa$ is a tuple $(L, C)$
where:
${L = \set{\ell_0, \ell_1, \ldots, \ell_{n-1}, \ell_{halt}}}$ is the set of
instructions and  ${C = \set{c_1, c_2}}$ is the set of two \emph{counters}. 
There is a distinguished terminal instruction  $\ell_{halt}$ called
HALT and the instructions $L$ are one of the following types:
\begin{itemize}
\item 
  {\textbf{increment}}.
  $\ell_i : c := c+1$;  goto  $\ell_k$,
\item
  {\textbf{decrement}}.
  $\ell_i : c := c-1$;  goto  $\ell_k$,
\item
  {\textbf{zero-test}}.
  $\ell_i$ : if $(c{>}0)$ then goto $\ell_k$
  else goto $\ell_m$,
\item
  {\textbf{Halt}}.
  $\ell_{halt}:$ HALT.
\end{itemize}
where $c \in C$, $\ell_i, \ell_k, \ell_m \in L$.
Let $I, D,$ and $O$ represent the sets of increment, decrement and  
zero-check instructions, respectively. 
\end{definition}
 
A configuration of a two-counter machine is a tuple $(\ell, c, d)$ where
$\ell \in L$ is an instruction, and $c, d$ are natural numbers that specify the
value
of counters $c_1$ and $c_2$, respectively.
The initial configuration is $(\ell_0, 0, 0)$.
A run of a two-counter machine is a (finite or infinite) sequence of
configurations $\seq{k_0, k_1, \ldots}$ where $k_0$ is the initial
configuration, and the relation between subsequent configurations is
governed by transitions between respective instructions.
The run is a finite sequence if and only if the last configuration is
the terminal instruction $\ell_{halt}$.
Note that a two-counter  machine has exactly one run starting from the initial
configuration. We assume without loss of generality that $\ell_0$ is 
an increment instruction.
The \emph{halting problem} for a two-counter machine asks whether 
its unique run ends at the terminal instruction $\ell_{halt}$.
It is well known~\cite{Min67} that the halting problem for two-counter machines is
undecidable.

\subsection{Reduction}  
Given a two counter machine $\Aa$ having instructions $L = \ell_1$, $\dots$,
$\ell_{n-1}, \ell_{halt}$,  we construct a MMS $\Hh_\Aa$ having a number of modes and
variables polynomial in  $n$. 
The idea is to simulate the two-counter machine in the MMS by going through a
sequence of modes such that a target point is reachable iff the counter machine
halts.  
We will next present the details of our reduction by characterizing the set of
modes, the set  of variables, rates of variables in different modes, as well as
the set of obstacles for the instance of multi-mode system corresponding to a
given  instance of counter machine.  
\begin{itemize}
\item {\bf Modes.}
  For every increment/decrement instruction $\ell_i \in I \cup D$,
we have two modes $\Mm_i$ and $\Mm_{ik}$, where 
$k$ is the index of the unique instruction $\ell_k$ to which the control shifts 
in $\Aa$ from $\ell_i$.
For every zero check instruction $\ell_i \in O$, 
we have four modes $\Mm^1_i, \Mm^2_i, \Mm_{ik}$ and $\Mm_{im}$, where 
$k,m$ are respectively the indices of the unique instructions 
$\ell_k, \ell_m$ to which the control shifts from $\ell_i$ depending on whether 
the counter value is $>0$ or $=0$. 
There are three modes $\Mm_{halt}, \Mm^{c_1}_{halt}$ and
   $\Mm^{c_2}_{halt}$ 
corresponding to the halt instruction. 
We have a special ``initial'' mode $\mathcal{I}$ which is the first mode to be
applied in any safe schedule in our reduction. This property is ensured using a
special variable $s_0$ and a careful definition of obstacles. 

\item {\bf Variables.}
  The multi-mode system has two variables $C=\{c_1, c_2\}$ that store the value
  of two counters.  
  There is a unique variable $S=\{s_0\}$, whose rate is 0 at all modes other than
  the  mode $\mathcal{I}$ and is used to enforce that mode  $\mathcal{I}$ is the
  first valid mode in the starting state. 
  For  $\ell_i \in I \cup D$, 
  there are variables $w_{ij}, x_{ij}$, where $j$ is the index of the unique
  instruction 
  $\ell_j$ to which control shifts from $\ell_i$.
  We write $W$ for the set
  $\set{w_{ij}, w_{i~halt} \::\: 0 \leq i,j \leq n \text{ and } \ell_i \notin O }$
  and 
  $X=\set{x_{ij} \::\: 0 \leq i,j \leq n}$.
  Also, we define a variable $z_{i\#}$ for every $\ell_i \in O$ and we write
  $Z=\set{z_{i\#} \::\: \ell_i \in O}$. 
  Hence the set of variables is
  \[
  \mathcal{V} = X \cup W \cup Z \cup C \cup S \cup \set{w_{halt}}.
  \]
  Let $X_{halt}$ be the subset of $X$ consisting of variables of the form $x_{i~halt}$.  
  Observe that the dimension of the MMS $|\mathcal{V}|$ is $\mathcal{O}(n^2)$, where $n$ is the number of instructions 
in the two counter machine.

\item {\bf Intution for Dynamics and Obstacles.}
  A simulation of the two counter machine going through instructions 
$\ell_0$, $\ell_1$, $\ell_2$, $\dots, \ell_y, \ell_{halt}$ is achieved by going through modes
$\mathcal{I}, \Mm_{0}, \Mm_{01}, \Mm_1$ or $\Mm^1_1$ or $\Mm^2_1 \dots, \Mm_y, \Mm_{y~halt}$ in order, spending exactly one unit of time in each mode. 
Starting from a point with $s_0=1$ and $v=0$ for all variables $v$ other than $s_0$, we 
want to reach a point where $w_{halt}=1$ and 
 $v=0$ for all variables $v$ other than $w_{halt}$. 
The idea is to start in  mode $\mathcal{I}$, and  
spending one unit of time in  $\mathcal{I}$ obtaining $s_0=0, w_{01}=1$. 
Growing $w_{01}$ represents that the current instruction is $\ell_0$, and the next one is $\ell_1$. 
Next, we shift to mode $\Mm_0$, 
spend one unit of time there to obtain $x_{01}=1, w_{01}=0$. 
This is followed by 
mode $\Mm_{01}$, where $x_{01}$ becomes 0, and one of the variables $z_{1\#}, w_{12}$ attain 1, depending on whether 
$\ell_1$ is a zero check instruction or not. 

In general, while at a  mode $\Mm_{ij}$, the next instruction 
$\ell_k$ after $\ell_j$ is chosen  by ``growing'' the variable $w_{jk}$ if $\ell_j$ is not a zero-check instruction, or 
by ``growing'' the  variable $z_{j\#}$  if $\ell_j$ is a zero-check instruction.
 In parallel, $x_{ij}$ grows down to 0, so that $x_{ij}+w_{jk}=1$ or $x_{ij}+z_{j\#}=1$. 
\begin{itemize}
\item  In the former case, the control shifts from $\Mm_{ij}$ to mode 
$\Mm_j$ where variable $x_{jk}$ is grows at rate 1 while $w_{jk}$ grows at rate -1, so that 
 $x_{jk}+w_{jk}=1$. Control shifts from $\Mm_j$ to $\Mm_{jk}$, where the next instruction $\ell_g$ after $\ell_k$ is chosen 
 by growing  variable $w_{kg}$ if $\ell_k$ is not zero-check instruction, or 
 the variable $z_{k\#}$ is grown if $\ell_k$ is a zero-check instruction.  
  \item  In the latter case, one of the modes $\Mm^1_j,\Mm^2_j$ is chosen from $\Mm_j$ where 
   $z_{j\#}$ grows at rate -1. 
  Assume $\ell_j$ is the instruction ``If the counter value is $>0$, then goto $\ell_m$, else goto $\ell_h$".   
      If  $\Mm^1_j$ is chosen, then the variable $x_{jm}$ grows at rate 1
   while if $\Mm^2_j$ is chosen, then the variable $x_{jh}$ grows at rate 1. 
   In this case, we have $z_{j\#}+x_{jm}=1$ or $z_{j\#}+x_{jh}=1$. 
   From $\Mm^1_j$, control shifts to $\Mm_{jm}$, while from 
   $\Mm^2_j$, control shifts to $\Mm_{jh}$.
\end{itemize}
Continuing in the above fashion, we eventually reach mode $\Mm_{y~halt}$
where $x_{y~halt}$ grows down to 0, while the variable $w_{halt}$ grows to 1, so that 
$x_{y~halt}+w_{halt}=1$. It remains to  use the modes $\Mm_{halt}, \Mm^{c}_{halt}$ 
as many times to obtain $c_1=0, c_2=0$ and $w_{halt}=1$. 

\item {\bf Dynamics.}
  We will next define the rates of the variables in different modes.
\begin{enumerate}
\item Variable rates at mode $\mathcal{I}$ are
 such that $R(\mathcal{I})(s_0)=-1, R(\mathcal{I})(w_{0j})=1$, while $R(\mathcal{I})(v)=0$
for all variables $v$ other than $s_0$ and $w_{0j}$. Here $j$ is the index of
the unique instruction $\ell_j$ to which the control shifts from the initial
instruction $\ell_0$ (recall that $\ell_0$ is an increment instruction, and
hence control shifts deterministically to some $\ell_j$). 
 
\item  
Assume $\ell_i \in I \cup D$ is an increment/decrement instruction for counter $c_1$($c_2$) and 
 let $\ell_j$ be the resultant instruction. 
The rates of variables at mode $\Mm_i$ are
$R(\Mm_i)(w_{ij})=-1, R(\Mm_i)(x_{ij})=1$,  and 
$R(\Mm_i)(v)=0$ for $v \neq c_1$ ($c_2$), while 
$R(\Mm_i)(c_1)=1$ ($R(\Mm_i)(c_2)=1$) if $\ell_i \in I$ and 
$R(\Mm_i)(c_1)=-1$ ($R(\Mm_i)(c_2)=-1$) if $\ell_i \in D$.  

\item For $\ell_i \in O$, the rates of variables in the modes $\Mm^1_i, \Mm^2_i, \Mm_{ik}, \Mm_{im}$ are
\begin{enumerate}
\item $R(\Mm^1_i)(z_{i\#})=-1,R(\Mm^1_i)(x_{ik})=1$, and we have $R(\Mm^1_i)(v)=0$ for all
  other variables $v$. 
\item $R(\Mm^2_i)(z_{i\#})=-1,R(\Mm^2_i)(x_{im})=1$, and we have $R(\Mm^2_i)(v)=0$ for all
  other variables $v$.   
  \end{enumerate}
 
 \item We have the modes $\Mm_{ij}$ for $i \in I \cup D \cup O$. 
The rates of variables at mode $\Mm_{ij}$, $j \neq halt$ are 
\begin{enumerate}
\item $R(\Mm_{ij})(x_{ij})=-1$
\item If $\ell_j$ is not a zero check instruction, 
then there is a  unique instruction $\ell_k$ to which the control will shift
from $\ell_j$.
Then $R(\Mm_{ij})(w_{jk})=1$, while $R(\Mm_{ij})(v)=0$ for all  variables
$v \notin \{w_{jk}, x_{ij}\}$.
\item If $\ell_j$ is a zero check instruction, then we have
 $R(\Mm_{ij})(z_{j\#})=1$, while $R(\Mm_{ij})(v)=0$ for all  variables
$v \notin \{z_{j\#}, x_{ij}\}$. 
\end{enumerate}
\item
  The rates of variables at mode $\Mm_{i ~halt}$ are as follows:
  \[
  R(\Mm_{i halt})(x_{i~halt})=-1 \text{ and } R(\Mm_{i~halt})(w_{halt})=1
  \]
  while all other variables have rate $0$.
  The rates at modes 
  $\Mm_{halt}$, $\Mm^{c_1}_{halt}$, $\Mm^{c_2}_{halt}$ are given by :
  \begin{itemize}
  \item $R(\Mm_{halt})(w_{halt})=-1$ and $R(\Mm_{halt})(v)=0$ for all other
    variables.
  \item $R(\Mm^{c_1}_{halt})(c_1)=-1$, $R(\Mm^{c_1}_{halt})(w_{halt})=1$ while 
    $R(\Mm^{c_1}_{halt})(v)=0$ for all other variables. 
  \item $R(\Mm^{c_2}_{halt})(c_2)=-1$, $R(\Mm^{c_2}_{halt})(w_{halt})=1$ while
    $R(\Mm^{c_2}_{halt})(v)=0$ for all other variables.
  \end{itemize}
\end{enumerate}
\item \textbf{Workspace and Obstacles.} 
Instead of describing the obstacles directly, we 
describe its complement. i.e. the \emph{safety set}.
The safety set consists of points in the
Euclidean space  satisfying $\bigwedge_{t=a}^g\varphi_t$ where: 
\begin{enumerate}
\item[($\varphi_a$)] $Init$:
  \begin{eqnarray*}
    0 \leq y \leq 1 & & \text{ for $y \in N \cup S \cup X \cup Z$}\\ 
    y \geq 0 && \text{ for $y \in \set{n_{halt}, c_1, c_2}$} 
  \end{eqnarray*}
\item[($\varphi_b$)] $Mutex(X)$ :
  \[
  \bigwedge_{i,j} \left(x_{ij}>0 \Rightarrow \bigwedge_{\tiny{\begin{array}{c}(g \neq
        i) \vee \\ (f \neq j) \end{array}}} x_{gf}=0 \right)
  \]
\item[($\varphi_c$)] $Mutex(W, Z)$ : 
  \[
  \bigwedge_{i,j} \left(w_{ij}>0 \Rightarrow \left(\bigwedge_{\tiny{\begin{array}{c}(g \neq
        i) \vee \\ (f \neq j) \end{array}}} w_{gf}=0  \wedge
  \bigwedge_{g}z_{g\#}=0 \right)\right)
  \]
\item[($\varphi_d$)] $Mutex(Z, W)$: 
  \[
  \bigwedge_{i} \left(z_{i\#}>0 \Rightarrow \left(\bigwedge_{k \neq i} z_{k\#}=0  \wedge
  \bigwedge_{g,f} w_{gf}=0 \right)\right)
  \]
\item[($\varphi_e$)] $Mutex(S,X)$:
  \[
  s_0 >0 \Rightarrow \bigwedge_{i,j} x_{ij}=0
  \]
\item[($\varphi_f$)] $Mutex(w_{halt}, X \cup W \cup Z \cup S)$:
  \[
  w_{halt}>0 \Rightarrow \bigwedge_{y \in \mathcal{V}\backslash (X_{halt} \cup
    C)}(y=0)
  \]
\item[($\varphi_g$)] $Sum(X_{halt},w_{halt})$:
  \[
  \bigwedge_i  \left(x_{i~halt} >0 \Rightarrow \left(x_{i~halt} + w_{halt}=1\right)\right)
  \]
\end{enumerate}

An obstacle $\mathcal{O}$ is thus one which satisfies $\bigvee_{t=a}^g \neg \varphi_t$.
As an example, $\mathcal{O}_{i~halt}=x_{i~halt} >0 \wedge (x_{i~halt}+w_{halt} \neq 1)$ is an obstacle 
obtained by negating $\varphi_g$.
Note that the safety set thus defined is not necessarily an open set. 
\end{itemize}

\subsection{Correctness of the Reduction}
We represent a point in the state space of the multi-mode system as a tuple of
valuation to all variables with an arbitrary but fixed ordering. 
In our ordering, $s_0$ is the first variable, followed by $c_1$ and $c_2$ as the next
two variables, and $w_{halt}$ as the last variable in the tuple.  
For the multi-mode system constructed earlier, we show that 
starting from the initial point $(1,0,0, \dots, 0)$ it is possible to safely
reach the target point $(0,0,0, \dots, 0,1)$ if and only if the corresponding
two counter machine halts.  

We present a set of lemma to prove the correctness. In particular, our lemmas establish that 
\begin{itemize}
\item The schedule begins in mode $\mathcal{I}$, and exactly one unit of time
  is spent in $\mathcal{I}$ (Lemma \ref{one}), 
\item the order of choosing modes is decided by the sequence of instructions 
  chosen in a correct simulation of the two counter machine (Lemmas \ref{two}, \ref{three}, \ref{last}),
  and
\item starting from $s_0=1$ and $v=0$ for all variables $v \neq s_0$, one can reach the point with $w_{halt}=1$ and 
  $v=0$ for all variables $v \neq w_{halt}$ avoiding obstacles iff (1), (2) are
  true. This is shown by Lemma \ref{really-last}.
\end{itemize}

\begin{lemma}
\label{one}
Any safe schedule must begin in mode $\mathcal{I}$ and exactly one unit of time
is spent at this mode. 
\end{lemma}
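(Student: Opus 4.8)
The plan is to analyze the start state $\px_s = (s_0 = 1, 0, \dots, 0)$ and to show that the safety constraints leave mode $\mathcal{I}$ as the only option for any positive sojourn, and that this sojourn is forced to have duration exactly one. I would organize the argument around two observations about the rate vectors and the constraints $\varphi_a$ (the $\mathit{Init}$/nonnegativity constraints) and $\varphi_e$ ($\mathit{Mutex}(S,X)$).

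First, to prove that a safe schedule must \emph{begin} in $\mathcal{I}$, I would note that at $\px_s$ every variable except $s_0$ equals $0$. Inspecting the dynamics, each mode other than $\mathcal{I}$ assigns rate $-1$ to at least one variable that is currently $0$: $\Mm_i$ drives $w_{ij}$ down, $\Mm_{ij}$ drives $x_{ij}$ down, $\Mm^1_i$ and $\Mm^2_i$ drive $z_{i\#}$ down, $\Mm_{i~halt}$ drives $x_{i~halt}$ down, $\Mm_{halt}$ drives $w_{halt}$ down, and $\Mm^{c_1}_{halt}, \Mm^{c_2}_{halt}$ drive a counter down. For any positive time spent in such a mode the corresponding coordinate becomes negative, violating the nonnegativity constraints of $\varphi_a$. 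Since $\mathcal{I}$ is the unique mode whose rate vector is nonnegative on every currently-zero coordinate (it only lowers $s_0$, which starts at $1$, while raising $w_{0j}$), the first positive-duration action of any safe schedule must use $\mathcal{I}$.

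Second, for the duration: in $\mathcal{I}$ the state after time $\tau$ is $s_0 = 1-\tau$, $w_{0j} = \tau$, with all other variables $0$, which is safe precisely for $\tau \in [0,1]$; for $\tau > 1$ either $s_0 < 0$ or $w_{0j} > 1$ breaks $\varphi_a$, giving the upper bound of one unit. For the matching lower bound I would show that so long as $s_0 > 0$ no mode other than $\mathcal{I}$ is safe. The case analysis is identical to the one above for every mode that lowers a zero coordinate; the single new case is $\Mm_0$, which from the intermediate state could lower $w_{0j}$ without driving it negative but simultaneously raises $x_{0j}$ above $0$ while $s_0 = 1-\tau > 0$ --- exactly the situation forbidden by $\varphi_e$ ($\mathit{Mutex}(S,X)$). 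Hence the only way to leave $\mathcal{I}$ safely and make any further progress is first to drive $s_0$ down to $0$, i.e. to spend exactly one unit of time in $\mathcal{I}$.

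The main obstacle is the exhaustive case analysis over the many mode types, and in particular isolating the single mode ($\Mm_0$) that escapes the simple nonnegativity argument; this is precisely the case that $\varphi_e$ is designed to rule out, so the crux of the proof is verifying that $\mathit{Mutex}(S,X)$ blocks a premature switch to $\Mm_0$ while $s_0$ is still positive. I would also state explicitly that ``exactly one unit'' is to be read relative to any safe schedule that uses a second mode: a schedule may trivially halt inside $\mathcal{I}$ after less than one unit, but such a truncated schedule leaves $s_0 > 0$ and therefore cannot proceed to simulate $\ell_0$ or contribute to reaching $\px_t$.
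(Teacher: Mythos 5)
Your proposal is correct and follows essentially the same route as the paper's proof: nonnegativity ($\varphi_a$) rules out every mode that lowers a currently-zero coordinate, and $\varphi_e$ ($\mathit{Mutex}(S,X)$) blocks a premature switch to $\Mm_0$ while $s_0>0$. Your explicit isolation of $\Mm_0$ as the one case escaping the nonnegativity argument, and your remark on how to read ``exactly one unit,'' are slightly more careful than the paper's write-up but do not change the argument.
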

\begin{proof}
  Observe that the starting point is such that $s_0=1$ and all other variables
  are 0.
  Assume if the schedule begins in a mode other than $\mathcal{I}$ 
  and spend $t$ units of time there.
  If the schedule begins in some mode $\Mm_i$ and spend time $t$, then we will
  obtain  $w_{ij}=-t$ for some $j$, hitting the obstacle $\neg Init$  (violating
  $\varphi_a$).
  Similarly, if the schedule begins in mode $\Mm^1_i$ or 
  $\Mm^2_i$, then again we will obtain $z_{i\#}<0$ hitting the obstacle $\neg
  Init$.
  Also, starting in mode $\Mm_{i~halt}$ or $\Mm^c_{halt}$ 
  or $\Mm_{halt}$ will give $x_{i~halt}<0$ or $c<0$ or $w_{halt}<0$,
  respectively, all of which will hit the obstacle $\neg Init$.
  Thus, any safe schedule must begin in mode $\mathcal{I}$. 
  Now we show that each such schedule must spend exactly one unit of time in
  $\mathcal{I}$.   
\begin{enumerate}
\item 
If more than one unit of time is spent at $\mathcal{I}$, then 
$s_0<0$ will violate $\varphi_a$. 
\item Assume that a time $t<1$ is spent at $\mathcal{I}$, and the 
control shifts to any other mode. Then we have $s_0=1-t$ and $w_{0j}=t$ 
on entering that mode. Note that $w_{0j}+s_0=1$. 
If any time is spent at that mode, 
then we will either obtain $s_0>0$ and 
some $x_{ij}\neq 0$ (hits $\neg Mutex(S,X)$ and violates $\varphi_e$) or 
some $w_{jk}, z_{k\#}<0$ (hits $\neg Init$ and violates $\varphi_a$). 
\end{enumerate}
The proof is now complete. \qed
\end{proof}

\begin{lemma}
\label{two}
After spending $1$ time unit in mode $\mathcal{I}$, any safe schedule must
choose mode  $\Mm_0$ followed by mode $\Mm_{0j}$ (where $\ell_j$ is the unique instruction that follows
$\ell_0$ in the two counter machine) both for exactly $1$ time unit. 
\end{lemma}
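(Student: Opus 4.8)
The plan is to proceed in two phases mirroring the two modes named in the statement, reusing in each phase the same two observations: a coordinate driven below zero violates $\varphi_a$ (Init), and two simultaneously positive coordinates from a common ``mutex'' class violate the corresponding constraint. By Lemma~\ref{one}, after the single time unit spent in $\mathcal{I}$ the run sits at the point $\px_1$ where $w_{0j}=1$ (with $\ell_j$ the unique successor of the increment instruction $\ell_0$) and every other variable equals $0$. I would first argue that $\Mm_0$ is the only mode applicable for positive time at $\px_1$ and that it must be used for exactly one unit, and then repeat the analysis at the resulting point for $\Mm_{0j}$.

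For the first phase, the only nonzero coordinate at $\px_1$ is $w_{0j}$, so any mode whose rate is negative on a currently-zero variable instantly pushes that variable below $0$ and leaves the safety set through $\varphi_a$. Scanning the rate definitions, this eliminates every mode except $\Mm_0$: the mode $\mathcal{I}$ drives $s_0<0$; each $\Mm_i$ with $i\neq 0$ drives its $w_{ij'}<0$; each $\Mm^1_i,\Mm^2_i$ drives $z_{i\#}<0$; every transition mode $\Mm_{gh}$ with $(g,h)\neq(0,j)$, as well as $\Mm_{i~halt}$ and $\Mm_{halt}$, drives an $x$- or $w_{halt}$-coordinate negative; and $\Mm_{0j}$ itself drives $x_{0j}<0$. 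A direct check then confirms $\Mm_0$ is safe for all delays $\tau\in[0,1]$, along which $w_{0j}=1-\tau$ and $x_{0j}=\tau$ stay in $[0,1]$, the counter $c=\tau$ stays nonnegative, and no two coordinates of one mutex class are simultaneously positive.

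To pin the delay to exactly $1$, I would note that $\tau>1$ forces $w_{0j}<0$ (violating $\varphi_a$), so it remains to rule out switching at some $\tau<1$, where $w_{0j}=1-\tau>0$. Applying the negativity test at this intermediate point, the only modes not killed outright are $\Mm_{0j}$ and $\Mm^c_{halt}$, since these decrease the now-positive coordinates $x_{0j}$ and $c$; but $\Mm_{0j}$ grows a fresh $w_{jk}$ (or $z_{j\#}$) while $w_{0j}>0$, contradicting $\varphi_c$, and $\Mm^c_{halt}$ grows $w_{halt}$ while $w_{0j}>0$, contradicting $\varphi_f$. Hence no switch is possible before $w_{0j}$ reaches $0$, i.e. before $\tau=1$, so the delay in $\Mm_0$ is exactly one unit. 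The second phase is symmetric: at the resulting point $\px_2$ one has $w_{0j}=0$, $x_{0j}=1$, $c=1$ and all else $0$; the same negativity argument, now with $\varphi_f$ handling $\Mm^c_{halt}$ (whose growth of $w_{halt}$ collides with $x_{0j}>0\in\mathcal{V}\setminus(X_{halt}\cup C)$), singles out $\Mm_{0j}$, while overshooting sends $x_{0j}<0$ and switching early grows a second $x$-coordinate through $\Mm_j$ (or $\Mm^1_j,\Mm^2_j$) in conflict with $x_{0j}>0$ under $\varphi_b$.

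The main obstacle is combinatorial rather than conceptual: the argument rests on an exhaustive sweep over all mode families, verifying for each candidate that some coordinate is driven negative or that two coordinates of one mutex class become simultaneously positive. Particular care is needed at the boundary case $\ell_j=\ell_{halt}$ (when the successor of $\ell_0$ is the halt instruction), which I would treat separately by appealing to the halt-specific constraint $\varphi_g$ together with Lemma~\ref{last}; the generic-successor analysis above otherwise carries through unchanged.
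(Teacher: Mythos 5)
Your proposal is correct and follows essentially the same strategy as the paper's proof: an exhaustive sweep over all mode families, killing each deviation either because some zero coordinate is driven negative (violating $\varphi_a$) or because two coordinates of a mutex class become simultaneously positive (violating $\varphi_b$--$\varphi_f$), applied first to force $\Mm_0$ for one unit and then $\Mm_{0j}$ for one unit. If anything you are slightly more explicit than the paper about the halt-mode cases (via $\varphi_f$) and the boundary case $\ell_j=\ell_{halt}$, but the argument is the same.
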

\begin{proof}
After spending one unit of time in $\mathcal{I}$, we have $s_0=0$ and 
$w_{0j}=1$. We claim that the control will switch to mode $\Mm_0$ from
$\mathcal{I}$.
\begin{enumerate}
\item
  Recall that $R(\Mm_0)(w_{0j})=-1$, and $R(\Mm_0)(x_{0j})=1$, where $j$ is
  the unique index of the instruction $\ell_j$ to which control shifts in the
  two counter machine from $\ell_0$. 
  If control switches to $\Mm_0$, and $0 \leq t \leq 1$ time is spent, then we
  have 
  $w_{0j}=1-t$, $x_{0j}=t$. The resultant points  
  are all safe; note that $w_{0j}+x_{0j}=1$.
\item
  If control switches from $\mathcal{I}$ to some $\Mm_k$, $k \neq 0$, or
  some $\Mm^1_k$ (or $\Mm^2_k$)
  or some $\Mm_{gf}$, 
  then we have
  \begin{itemize}
  \item $R(\Mm_k)(w_{kg})=-1$, and $R(\Mm_k)(x_{kg})=1$ for some $g$, or
  \item $R(\Mm^1_k)(z_{k\#})=-1$, and $R(\Mm_k)(x_{kg})=1$ for some $g$. 
  \item $R(\Mm_{gf})(x_{gf})=-1$.
  \end{itemize}
  If $t>0$ time is spent at $\Mm_k$, or $\Mm^1_k$ ($\Mm^2_k$)
  or $\Mm_{gf}$,  then we obtain 
  $w_{kg}<0$ or  $z_{k\#}<0$ or $x_{gf}=-t<0$, violating the safety
  requirement. 
\end{enumerate}
Next we claim that exactly one unit of time is spent at $\Mm_0$ before control
switches to any other mode. By Lemma \ref{one}, we have $s_0=0, w_{0j}=1$ 
on entering $\Mm_0$.
\begin{enumerate}
\item
  It is easy to see that if time $t>1$ is spent at $\Mm_0$, the obstacle $\neg Init$ is
  hit, since $x_{0j}>1$.
\item
  Assume now that  $t$ time is spent at $\Mm_0$, and control switches to
  some mode. Then we have $w_{0j}=1-t, x_{0j}=t$ on exiting $\Mm_0$. 
  \begin{itemize}
  \item
    If mode $\Mm_{0j}$ is chosen after $\Mm_0$, and a time $t'>0$ is spent at
    $\Mm_{0j}$, then 
    we obtain $x_{0j}=t-t'$, $w_{0j}=1-t$. Also,  we have $w_{jk}=t'>0$ or 
    $z_{j\#}=t'>0$, depending on whether $\ell_j$ is not a zero check 
    instruction having $\ell_k$ as its successor, or $\ell_j$ is a zero check
    instruction. 
    In the former case, we obtain $w_{0j}>0$ and $w_{jk}>0$, violating $\varphi_c$, while
    in the latter case, we obtain $w_{0j}>0$ and $z_{j\#}>0$, again violating
    $\varphi_c$. However, if $t=1$ time is spent at $\Mm_0$, then there is no violation
    since we have $w_{0j}=0, x_{0j}=1-t'$ and exactly one of $w_{jk}=t'$ or
    $z_{j\#}=t'$.  
  \item
    Assume that a time $t=1$ is spent at $\Mm_0$, but a mode other than
    $\Mm_{0j}$ is chosen from $\Mm_0$ and a time $t'>0$ is spent there. 
    \begin{itemize}
    \item
      If $\Mm_k$ is chosen for some $k$, 
      then we obtain  $x_{kg}=t'$. This violates $\varphi_b$ since $x_{kg}=t'$ and
      $x_{0j}=1$.
    \item
      If $\Mm_{kg}$ is chosen for some $k,g \neq 0,j$, then we obtain 
      $w_{gh}=t'$ or $z_{g\#}=t'$ along with $x_{kg}=-t'$. 
      This violates $\varphi_a$ since $x_{kg}=-t' <0$.
    \end{itemize}
  \end{itemize}
\end{enumerate}
Thus, we have seen that starting from $\mathcal{I}$, the control shifts to
$\Mm_0$
and then to $\Mm_{0j}$ in order,  spending exactly one unit of time at
 $\Mm_0$. 
We now argue that the time $t'$ spent at $\Mm_{0j}$ has to be 1.
Assume $t' \neq 1$. Spending $t'$ units of time at $\Mm_{0j}$ results 
in $x_{0j}=1-t'$ and one of $w_{jk}=t'$ or $z_{j\#}=t'$. 
\begin{enumerate}
\item
  If $t' >1$, then we obtain $x_{0j}=1-t' <0$, violating $\varphi_a$.
\item
  Assume $t'<1$, and control switches from $\Mm_{0j}$ to some $\Mm_f$.
  Let $t''$ time be spent at $\Mm_f$. If $f \neq 0$, then we obtain 
  $x_{fg}=t''>0$ for some $g$, and  $x_{0j}=1-t'>0$
  violating $\varphi_b$. If $f=0$, then we obtain $x_{0j}=1-t'+t''$, but
  $w_{0j}=-t''<0$, violating $\varphi_a$.
\item
  Assume $t'<1$, and control switches from $\Mm_{0j}$ to some $\Mm^1_f$ or 
  $\Mm^2_f$.
  Let $t''$ time be spent at $\Mm^1_f$ ($\Mm^2_f$).
  Then we obtain $z_{f\#}=-t''<0$ violating $\varphi_a$.
\item
  Assume $t'<1$, and control switches from $\Mm_{0j}$ to some
  $\Mm_{cd}$. Then we have  one of $z_{d\#}=t''>0$
  or $w_{dh}=t''>0$ along with one of 
  $w_{jk}=t'>0$ or $z_{j\#}=t'>0$, both which violate one of $\varphi_c, \varphi_d$. 
\end{enumerate}
If $t'=1$, then we have $x_{0j}=0$ and 
one of $w_{jk}=1$ or $z_{j\#}=1$.
The proof is now complete. \qed
\end{proof}

Now if $j$ was not a zero check instruction, and has $\ell_k$ 
as the successor of $\ell_j$, then as seen above in Lemma \ref{two} in the case
of $\Mm_0$ and $\Mm_{0j}$, we can show that
the control has to shift to $\Mm_j$ from $\Mm_{0j}$. If $j$ is a zero check
instruction,
then we claim that the control has to switch from $\Mm_{0j}$ to one of 
$\Mm^1_j$ or $\Mm^2_j$. Lemma \ref{three} generalises  
this claim.

\begin{lemma}
\label{three}
If the system is in mode $\Mm_{gf}$ then any safe schedule must pick mode
$\Mm_f$ if $f$ is not a zero check instruction.  
However, if $f$ is a zero check instruction, then the next mode must be either
of mode $\Mm^1_f$ or $\Mm^2_f$.
Any safe schedule must also spent $1$ time unit at modes $\Mm_{gf}$ and at $\Mm_f$,
$\Mm^1_f$, or $\Mm^2_f$ (as is the case).
\end{lemma}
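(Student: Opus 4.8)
The plan is to generalize the case analysis of Lemma~\ref{two} and to exploit the lock-step structure of the simulation. I will first isolate the \emph{entry invariant} that holds each time the run enters a mode $\Mm_{gf}$: namely $x_{gf}=1$ while every variable in $W\cup Z\cup S\cup\set{w_{halt}}$ equals $0$ (only the counters $c_1,c_2$ may be nonzero). For the first such mode $\Mm_{0j}$ this is precisely the state reached at the end of Lemma~\ref{two}; the present lemma will re-establish the same invariant for the following mode of this form, so that it propagates by induction along the run.

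Granting the entry invariant, the first step bounds the dwell time in $\Mm_{gf}$. Since $R(\Mm_{gf})(x_{gf})=-1$, after time $t$ we have $x_{gf}=1-t$, so $t>1$ forces $x_{gf}<0$ and violates $\varphi_a$; hence $t\le 1$. In parallel, $w_{fk}=t$ (if $\ell_f\notin O$) or $z_{f\#}=t$ (if $\ell_f\in O$) becomes positive. The core is then a case analysis over the next mode, into which we move and spend time $t''>0$. Every mode $\Mm_{g'f'}$, every $\Mm_{f'}$ with $f'\ne f$, and every $\Mm^1_{f'},\Mm^2_{f'}$ drives one of the currently-zero variables $x_{g'f'}$, $w_{f'k'}$, $z_{f'\#}$ strictly negative, contradicting $\varphi_a$; the halt modes either do likewise or activate $w_{halt}$ and are caught by $\varphi_f$. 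This eliminates all candidates except $\Mm_f$ in the non-zero-check case and $\Mm^1_f,\Mm^2_f$ in the zero-check case, which proves the mode-choice part of the statement.

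The same analysis simultaneously pins the dwell time to exactly $1$. Entering the intended successor grows a fresh $x$-variable ($x_{fk}$, resp.\ $x_{fk}$ or $x_{fm}$); if $t<1$ then $x_{gf}=1-t>0$ is still positive, so two $x$-variables are positive at once and $\varphi_b$ fails. Together with $t\le 1$ this forces $t=1$, leaving $x_{gf}=0$ and exactly $w_{fk}=1$ (resp.\ $z_{f\#}=1$). I then repeat the argument one level down at the successor: there $x_{fk}$ grows while $w_{fk}$ (resp.\ $z_{f\#}$) shrinks from $1$, $t>1$ again violates $\varphi_a$, and any transition attempted before $w_{fk}$ (resp.\ $z_{f\#}$) has returned to $0$ grows a second $w$- (resp.\ $z$-) variable and violates $\varphi_c$ (resp.\ $\varphi_d$). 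Hence the successor is also held for exactly one unit, its only admissible next mode is $\Mm_{fk}$ (non-zero-check) resp.\ $\Mm_{fk}$ or $\Mm_{fm}$ (zero-check), and on entering that mode we again have $x_{fk}=1$ with all of $W\cup Z\cup S\cup\set{w_{halt}}$ back to $0$ --- exactly the entry invariant, closing the induction.

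The argument is uniform: every deviation is caught by one of just two mechanisms --- sending a zero variable negative (detected by $\varphi_a$), or activating a second mutually-exclusive variable (detected by $\varphi_b$, $\varphi_c$, or $\varphi_d$). The only real obstacle is therefore the bookkeeping of enumerating all wrong successor modes in each case; it is laborious but mechanical. Finally, note that the lemma asserts only that a zero-check is followed by \emph{one of} $\Mm^1_f,\Mm^2_f$; which branch is consistent with the actual counter value is not forced here and is left to the later lemmas.
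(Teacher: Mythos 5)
Your proposal is correct and follows essentially the same route as the paper's proof: the same entry invariant ($x_{gf}=1$ with all non-counter variables zero), the same exhaustive case analysis over candidate successor modes, and the same two violation mechanisms (a zero variable driven negative against $\varphi_a$, or a second mutually-exclusive variable activated against $\varphi_b$, $\varphi_c$, or $\varphi_d$). The only difference is presentational: you make the entry invariant and its propagation by induction explicit (and you explicitly dispatch the halt modes via $\varphi_f$), whereas the paper leaves these implicit by appealing to "a safe execution from the starting state" and to the analogous cases in Lemma~\ref{two}.
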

\begin{proof}
  Assume that the system is in mode $\Mm_{gf}$, and assume that it followed a safe
  execution from the starting state.
  We know that $R(\Mm_{gf})(x_{gf})=-1$.
  In this case, upon entering mode $\Mm_{gf}$, we must have $x_{gf}=1$, while all
  variables except $c_1,c_2$ must be $0$.
  There are two cases to consider. 
  \begin{enumerate}
  \item
    $\ell_f \not \in O$. Then $R(\Mm_{gf})(w_{fq})=1$
    for some unique index $q$ corresponding to the successor $\ell_q$ of $\ell_f$.
    Spending $t=1$ here results in $x_{gf}=0$ and $w_{fq}=1$, with no violation to the non-hitting zone.
    As seen in Lemma \ref{two} for the case of $\Mm_{0j}$, a time $t'$ spent at
    $\Mm_{gf}$ is safe 
    iff $t'=1$ ($t'>1$ violates safety immediately, while a switch in control 
    from $\Mm_{gf}$ with $t'<1$ to any mode
    disallows spending time at the new mode).
  \item
    $\ell_f \in O$.
    Then $R(\Mm_{gf})(z_{f\#})=1$.
  Spending $t$ unit of time at 
  $\Mm_{gf}$ results in $x_{gf}=1-t$ and $z_{f\#}=t$.
  \begin{enumerate}
\item If $t>1$, then we obtain $x_{gf}<0$ violating $\varphi_a$.   
 \item Assume $t<1$, and control switches 
 out of $\Mm_{gf}$. 
 \begin{itemize}
 \item If the next mode chosen is some $\Mm_k$, and $t'>0$ units of time spent,
   then we obtain 
 $w_{k*}=-t'<0$ violating $\varphi_a$.
\item If the next mode chosen is some $\Mm^1_k$ or
$\Mm^2_k$ with $k \neq f$, and $t'$ units of time spent there, then we obtain
$z_{k\#}=-t'<0$ violating $\varphi_a$.
\item If the next mode chosen 
is some $\Mm_{cd}$, then we obtain $x_{cd}=-t'<0$
 violating $\varphi_a$.
 \item If the next mode chosen is 
 $\Mm_f^1$ or $\Mm_f^2$, and $t'>0$ units of time spent, then we obtain 
 $z_{f\#}=t-t'$.  However, we also obtain  $x_{f*}=t'>0$ and 
 $x_{gf}=1-t >0$ violating $\varphi_b$. 
 \end{itemize}
\item Assume $t=1$ unit of time is spent at $\Mm_{gf}$ and control switches out
  of $\Mm_{gf}$. We then have 
$x_{gf}=0$ and $z_{f\#}=1$.
\begin{itemize}
 \item If the next mode chosen is some $\Mm_k$, and $t'>0$ units of time spent,
   then we obtain 
 $w_{k*}=-t'<0$ violating $\varphi_a$.
 \item If the next mode chosen is $\Mm_j^1$ or $\Mm_j^2$, $j \neq f$, and $t'>0$
   units of time 
 spent, then we obtain $z_{j\#}=-t'<0$ violating $\varphi_a$.
\item If the next mode chosen is some $\Mm_{cd}$ and 
$t'>0$ units of time spent, then we obtain 
 $x_{cd}=-t'<0$, violating $\varphi_a$.
\item  If the next mode chosen is 
 $\Mm_f^1$ or $\Mm_f^2$, and $t'>0$ units of time spent, 
 we obtain $z_{f\#}=1-t'$, along with $x_{f*}=t'>0$.
 If $\ell_f$ is the instruction ``if $c_1>0$, goto $f_1$, else goto $f_2$'',
 then 
 * is either $f_1$ or $f_2$. There is no violation to safety as long as  $c_1>0$
  and $\Mm^1_f$ is chosen, or $c_1=0$ and $\Mm_f^2$ is chosen when $t'\leq 1$.
  If $t'=1$, then we obtain $z_{f\#}=0$ and $x_{f*}=1$.
    After spending $t'=1$ unit of time at $\Mm_f^1$ or $\Mm_f^2$, 
   assume the control switches to 
  $\Mm_{f*}$, and a time $t''$ is spent there.  As seen in Lemma \ref{two}, 
  it can be shown that there is no violation to safety as long as $t''\leq 1$. 
In particular, it can be shown that if $t''<1$, and control switches 
out of $\Mm_{f*}$, safety is violated. 
\end{itemize}
\end{enumerate}
\end{enumerate}
\end{proof}

Thus, it can be seen that some obstacle is hit if a time other than one  is
spent  at any mode, or if a mode violating the order of instructions in the two
 counter machine is chosen.  
Lemmas \ref{one}, \ref{two} and \ref{three} prove this.  
Assume now that the mode switching happens respecting  the 
instruction flow in the two counter machine, and one unit of time is spent at
each mode. It can be seen that 
two counter machine halts iff some mode $\Mm_{i~halt}$ is reached. 
After spending one unit of time at $\Mm_{i~halt}$, we obtain $w_{halt}=1$, and
all 
$x$ variables 0. Note that by condition $\varphi_f$,  
no $x$ variables other than $x_{i~halt}$ can be non-zero when 
$w_{halt}>0$. 

\begin{lemma}
\label{last}
Any safe schedule, upon entering mode $\Mm_{i~halt}$, must spend exactly one unit of time.
\end{lemma}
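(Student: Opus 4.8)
The plan is to follow exactly the case-analysis template of Lemmas~\ref{two} and~\ref{three}, but to notice that for the halt variables only two obstacles are active: $\varphi_f$ (the mutual exclusion $Mutex(w_{halt},\ldots)$) and $\varphi_g$ (the sum constraint $Sum(X_{halt},w_{halt})$). First I would record the entry configuration. By the preceding lemmas a safe schedule can enter $\Mm_{i~halt}$ only from $\Mm_i$ after spending one unit of time there, so on entry $x_{i~halt}=1$, $w_{halt}=0$, and every variable other than $x_{i~halt}$, $c_1$, $c_2$ is $0$. Since $R(\Mm_{i~halt})(x_{i~halt})=-1$ and $R(\Mm_{i~halt})(w_{halt})=1$ with all other rates $0$, after dwelling time $t$ we have $x_{i~halt}=1-t$ and $w_{halt}=t$; in particular the invariant $x_{i~halt}+w_{halt}=1$ is maintained, so $\varphi_g$ is respected while control stays in this mode.

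Next I would dispose of $t>1$ immediately: then $x_{i~halt}=1-t<0$, contradicting $\varphi_a$ ($Init$), which forces $x_{i~halt}\ge 0$. Hence $t\le 1$, and the crux is to show control cannot safely leave $\Mm_{i~halt}$ with $t<1$. At such an exit we have $x_{i~halt}=1-t>0$ and $w_{halt}=t>0$, so $\varphi_f$ forces every variable outside $X_{halt}\cup C$ to be $0$; in particular $s_0$, all $w_{gf}$ and $w_{g~halt}$, all $z_{g\#}$, and all $x_{gf}\notin X_{halt}$ are currently $0$.

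I would then split the candidate successor modes into two groups. For any mode in the simulation family ($\mathcal I$, $\Mm_k$, $\Mm^1_k$, $\Mm^2_k$, or $\Mm_{cd}$ with $cd\neq i~halt$), the rate vector decreases some coordinate $y\notin X_{halt}\cup C$ that is presently $0$ (namely $s_0$, some $w_{kg}$ or $w_{k~halt}$, some $z_{k\#}$, or some $x_{cd}$), so spending any $t'>0$ yields $y=-t'<0$ and violates $\varphi_a$, exactly as in Lemmas~\ref{two} and~\ref{three}. The remaining modes are the halt modes $\Mm_{halt}$, $\Mm^{c_1}_{halt}$, $\Mm^{c_2}_{halt}$: here $x_{i~halt}=1-t$ is frozen and positive while $w_{halt}$ changes (decreasing by $t'$ in $\Mm_{halt}$, increasing by $t'$ in the two $\Mm^{c}_{halt}$), so the sum becomes $x_{i~halt}+w_{halt}=1\mp t'\neq 1$; since $x_{i~halt}>0$ this violates $\varphi_g$ for every $t'>0$. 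Thus no safe continuation exists with $t<1$, and combined with $t\le 1$ this pins the dwell time at $t=1$.

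The per-mode bookkeeping in the third step is routine and essentially a transcription of the earlier lemmas. The one genuinely new point—and the step I expect to demand the most care—is the treatment of the halt modes $\Mm_{halt}$ and $\Mm^{c}_{halt}$: these never drive a coordinate negative, so they cannot be ruled out by $\varphi_a$ alone. The right observation is that once \emph{both} $x_{i~halt}>0$ and $w_{halt}>0$ hold, the obstacle $\varphi_g$ rigidly couples them as $x_{i~halt}+w_{halt}=1$, and any mode that alters $w_{halt}$ while leaving $x_{i~halt}$ fixed instantly breaks this sum. This coupling is precisely what forces the system to consume the full unit of time in $\Mm_{i~halt}$ before $x_{i~halt}$ reaches $0$ and the halt cleanup modes become usable.
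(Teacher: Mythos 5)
Your proof is correct and follows essentially the same route as the paper's: rule out leaving $\Mm_{i~halt}$ before one time unit by showing that any simulation-family successor drives a currently-zero variable negative (violating $\varphi_a$), while the halt modes break the coupling $x_{i~halt}+w_{halt}=1$ enforced by $\varphi_g$ as long as $x_{i~halt}>0$. Your explicit dismissal of $t>1$ via $\varphi_a$ is a small addition the paper leaves implicit, but the argument is otherwise the same.
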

\begin{proof}
On entering $\Mm_{i~halt}$, we have $x_{i~halt}=1$ and $w_{halt}=0$.
Assume $t<1$ time is spent at $\Mm_{i~halt}$ and a mode change happens. Then we
have 
$x_{i~halt}=1-t$ and $w_{halt}=t$.
\begin{itemize}
\item If we move to any $\Mm_k, \Mm^1_k, \Mm^2_k$ or $\Mm_{cd}$, and elapse a
  time $t'>0$, we 
will have a safety violation due to some variable becoming negative ($w_{k*}$ in
the case of $\Mm_k$, 
$x_{cd}$ in the case of $\Mm_{cd}$ and $z_{k\#}$ in the case of 
$\Mm^1_k, \Mm^2_k$).
\item Assume that we move to $\Mm_{halt}$ and spend $t'>0$ time there. 
Then we obtain $w_{halt}=t-t'$. This violates $\varphi_g$  since we have $x_{i~halt}+w_{halt} \neq 1$.
Moving to $\Mm^{c_1}_{halt}, \Mm^{c_2}_{halt}$ also violates safety for the same
reason.
\end{itemize}
However, if $t=1$, then we have $w_{halt}=1$
and all other $x,n$ variables are 0. Moving to any mode other 
$\Mm_{halt}$ or 
$\Mm^{c_1}_{halt}, \Mm^{c_2}_{halt}$ will violate $\varphi_f$. 
\end{proof}

\begin{lemma}[Correctness]
\label{really-last}
The target point $(0,0,0,\dots,1)$ is safely reachable from the starting point $(1,0,0,
\dots,0)$ in multi-mode system $\Hh_\Aa$  iff the two counter machine $\Aa$ halts. 
\end{lemma}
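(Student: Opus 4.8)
The plan is to prove the two directions of the biconditional, using the structural Lemmas~\ref{one}--\ref{last} as the backbone, since these already force any safe schedule to replay the computation of $\Aa$ one instruction at a time. For the ($\Leftarrow$) direction I would assume that $\Aa$ halts and build the witnessing schedule explicitly. As $\Aa$ is deterministic, its halting run is a finite sequence of configurations $(\ell_0,0,0)=\kappa_0,\kappa_1,\ldots,\kappa_T=(\ell_{halt},a,b)$. I would concatenate the mode-blocks prescribed by the reduction, spending exactly one time unit in each mode: start with $\mathcal{I}$; for a step out of an increment/decrement instruction $\ell_i$ (to $\ell_j$) append $\Mm_i$ then $\Mm_{ij}$; for a step out of a zero-check $\ell_i$ append $\Mm^1_i$ or $\Mm^2_i$ (the branch matching the true counter value in $\kappa_p$) then the corresponding $\Mm_{i\ast}$; and finish with $\Mm_{y~halt}$ followed by the cleanup modes $\Mm^{c_1}_{halt}$, $\Mm^{c_2}_{halt}$, $\Mm_{halt}$ run for $a$, $b$, and $a+b$ time units to drive the counters to $0$ while restoring $w_{halt}=1$. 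The heart of the argument is the loop invariant that immediately before the block for $\kappa_p=(\ell_p,a,b)$ the state has $c_1=a$, $c_2=b$, a single indicator coordinate equal to $1$ (namely $w_{p\,j}$, or $z_{p\#}$ when $\ell_p$ is a zero-check), and all other coordinates $0$. I would verify that each unit-time step preserves this invariant and stays safe by testing it against $\varphi_a$--$\varphi_g$: at every instant at most the two coordinates being exchanged are positive and sum to $1$, so the mutex conditions $\varphi_b$--$\varphi_e$ hold and the bounds $0\le y\le 1$ of $\varphi_a$ hold because each mode runs for exactly one unit; the counters stay nonnegative because a faithful run never decrements a zero counter; and in the cleanup phase $w_{halt}>0$ with all simulation variables $0$ satisfies $\varphi_f$ and $\varphi_g$. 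The terminal state is exactly $(0,0,\ldots,0,1)$.

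For the ($\Rightarrow$) direction I would show that every safe schedule reaching the target $\px_t$ must simulate $\Aa$ up to halting. Lemma~\ref{one} fixes the first mode and its duration, and Lemmas~\ref{two} and~\ref{three} give, by induction on the number of completed blocks, that from each $\Mm_{gf}$ the unique safe continuation is the mode-block for the machine's next instruction, again for exactly one unit of time, with the zero-test branch pinned down by the sign of the relevant counter. This reproduces the invariant above, so the visited modes are in bijection with a prefix of the run of $\Aa$ and the coordinates $c_1,c_2$ track the true counters. It then remains to argue that $w_{halt}$ can become positive only when $\Aa$ genuinely reaches $\ell_{halt}$: the only modes with positive rate on $w_{halt}$ are $\Mm_{i~halt}$, $\Mm^{c_1}_{halt}$ and $\Mm^{c_2}_{halt}$, and entering either cleanup mode while any simulation coordinate ($s_0$, some $w_{ij}$, $x_{ij}$ or $z_{i\#}$) is still positive immediately violates $\varphi_f$; since the invariant keeps exactly one such coordinate positive throughout the simulation, $w_{halt}$ is first switched on only through $\Mm_{y~halt}$, which by Lemma~\ref{last} is reached precisely when the simulated control passes to $\ell_{halt}$. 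Hence a state with $w_{halt}=1$ is reachable only if $\Aa$ halts.

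The step I expect to be the main obstacle is the zero-test fidelity inside this forced-simulation argument: one must be certain that the wrong branch at a zero-check is not merely locally disallowed but globally precludes reaching $\px_t$, so that a non-halting run cannot be ``rescued'' into the target by a lucky sequence of branch choices. Concretely, the delicate case is to show that choosing $\Mm^1_i$ when the true counter is $0$ (or $\Mm^2_i$ when it is positive) eventually forces some coordinate negative, violating $\varphi_a$, or leaves a forbidden simulation coordinate positive when $w_{halt}$ is activated, violating $\varphi_f$; this is exactly the content we are entitled to import from Lemma~\ref{three}. Everything else is the mechanical bookkeeping that each unit-time step keeps precisely the intended pair of variables positive and summing to one, which is what the mutex constraints $\varphi_b$--$\varphi_f$ and the boundedness constraint $\varphi_a$ were designed to enforce.
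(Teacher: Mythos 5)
Your proposal is correct and follows essentially the same route as the paper's proof: the forward direction concatenates the unit-time mode blocks along the halting run and then uses $\Mm^{c_1}_{halt}$, $\Mm^{c_2}_{halt}$, $\Mm_{halt}$ to zero the counters and restore $w_{halt}=1$, while the converse is delegated to the structural Lemmas~\ref{one}--\ref{last} that force any safe schedule to replay the machine's run. If anything, your invariant-based treatment of the converse (in particular the observation that $\varphi_f$ blocks premature activation of $w_{halt}$, and the explicit flagging of zero-test fidelity as the point imported from Lemma~\ref{three}) is more detailed than the paper's one-sentence appeal to the construction.
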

\begin{proof}
Assume that the two counter machine halts. Then starting from the initial
instruction, 
we reach the instruction $\ell_{halt}$. From the above lemmas, and the
construction 
of the MMS, we know that starting from the initial mode $\mathcal{I}$, we will 
reach a unique mode $\Mm_{i~halt}$, spending one unit of time at all 
the intermediate modes. From Lemma \ref{last}, we also know that a time of one
unit is spent at 
$\Mm_{i~halt}$, and that the only safe modes to goto from here are    
$\Mm_{halt}$ or $\Mm^{c_1}_{halt}, \Mm^{c_2}_{halt}$. 
The use of modes 
$\Mm^{c_1}_{halt}, \Mm^{c_2}_{halt}$
is just to obtain $c_1=c_2=0$. Notice that $w_{halt}$ stays non-zero and grows
in these modes. 
Once we achieve $c_1=c_2=0$, then we can visit $\Mm_{halt}$ and 
obtain $w_{halt}=1$. Notice that when this happens, we will have 
all variables other than $w_{halt}$ as 0. By our safety conditions, 
no $x, w$ or $z$ variable can stay non-zero when $w_{halt}>0$.
Even if we obtain $w_{halt}=0$ by staying at $\Mm_{halt}$, we cannot visit any
other mode, since 
atleast one variable will become negative and violate safety. Our starting
configuration 
with $s_0=1$ ensured that we could start from $\mathcal{I}$ and continue in a
safe manner.

The converse,  that is, any safe schedule starting from $(1,0,0, \dots, 0)$
and reaching $(0,0,0, \dots, 1)$ is possible only when the two counter machine
halts by the construction of the MMS.
\qed
\end{proof}

\subsection{Undecidability: Example of the reduction} 
\label{app:undec-eg}

Consider an example of a two counter machine with counters $c_1,c_2$ and
  the following instruction set.
  \begin{itemize}
    \item $\ell_0 : c_1 := c_1+1$;  goto  $\ell_1$
    \item $\ell_1 : c_1 := c_1-1$;  goto  $\ell_2$
    \item $\ell_2$ : if $(c_2{>}0)$ then goto $\ell_3$;
      else goto $\ell_0$
    \item $\ell_3:$ HALT.
  \end{itemize}
  Note that this machine does not halt.
  We now describe a multi-mode system that simulates this two counter machine. 
  The modes and variables are as follows.
 \begin{enumerate}
 \item Variables $\{c_1, c_2\}$ correspond to the two counters, $\{w_{01}, w_{12},z_{2\#}\}$  correspond to instructions 
 $\ell_0,\ell_1,\ell_2, \ell_3$ and the switches between instructions, and variable $w_3$ corresponds to the halt  
 instruction. We also have variables $s_0$ and $\{x_{ij} \mid 0 \leq i,j \leq 3\}$.   
 \item Mode $\mathcal{I}$: $R(\mathcal{I})(s_0)=-1$ and 
 $R(\mathcal{I})(w_{01})=1$ and other variables have rate 0.
 \item Modes $\Mm_0, \Mm_1, \Mm_2^1, \Mm_2^2, \Mm_{01}, \Mm_{12}, \Mm_{20}$ with rates
\begin{itemize}
\item  $R(\Mm_0)(w_{01})=-1, R(\Mm_0)(x_{01})=R(\Mm_0)(c_1)=1$, and $R(\Mm_0)(v)=0$ for all other variables $v$
\item  $R(\Mm_1)(w_{12})=-1, R(\Mm_1)(x_{12})=1, R(\Mm_1)(c_1)=-1$, and $R(\Mm_1)(v)=0$ for all other variables $v$
\item  $R(\Mm^1_2)(z_{2\#})=-1=R(\Mm^2_2)(z_{2\#}),
 R(\Mm_2^1)(x_{23})=1=R(\Mm_2^2)(x_{20})$. All other variables have rate 0 in modes 
 $\Mm_2^1,\Mm_2^2$.
 \item $R(\Mm_{12})(z_{2\#})=1$, and $R(\Mm_{12})(v)=0$ for all other variables $v$
 \item $R(\Mm_{01})(x_{01})=-1$, $R(\Mm_{01})(w_{12})=1$,  and $R(\Mm_{01})(v)=0$ for other $v$ 
 \item $R(\Mm_{20})(x_{20})=-1$, $R(\Mm_{20})(w_{01})=1$, and $R(\Mm_{20})(v)=0$ for  other $v$ 
 \end{itemize}
\item Mode $\Mm_{23}$ with  $R(\Mm_{23})(x_{23})=-1,R(\Mm_{23})(w_3)=1$,  $R(\Mm_{23})(v)=0$ for other $v$.  Modes $\Mm_3, \Mm_3^{c_1}, \Mm_3^{c_2}$ with 
$R(\Mm_3)(w_3)=-1$ and $R(\Mm_3)(v)=0$ for all other variables $v$; 
$R(\Mm^{c_i}_3)(c_i)=-1, R(\Mm^{c_i}_3)(w_3)=1$. 
 \end{enumerate}

The safety set is given by the conjunction of conditions (1)-(7).\\
(1) $0 \leq w_{01}, w_{12},x_{ij},z_{2\#},s_0 \leq 1$, $0 \leq w_3, c_1, c_2$, 
(2) At any point, if some $x_{ij}$ is non-negative, then all the other $x_{kl}$ variables are 0, $i \neq k, j \neq l$.
(3) At any point, if some $w_{ij}$ is non-negative, then all other $w_{kl}$ are zero, $i \neq k, j \neq l$ and 
$z_{2\#}=0$.
(4)At any point, if $z_{2\#}>0$, then all the $w_{ij}$ are 0.
(5) At any point, if $s_0>0$, then all $x_{ij}=0$.
(6) At any point, if $w_3 >0$, then variables $x_{i3}$ are 0, and $c_1,c_2=0$.
(7)At any point, if $x_{i3}>0$, then $x_{i3}+w_3=1$.
The obstacles are hence, the complement of this conjunction. 

Starting from $s_0=1$ and $v=0$ for $v \neq s_0$, a safe computation 
must start from $\mathcal{I}$, then visit in order modes $\Mm_0, \Mm_{01}, \Mm_1, \Mm_{12}, \Mm_2^1, \Mm_{20}$, and repeat 
this sequence spending one unit in each mode.  This will not reach $\Mm_3$. 

We start from the point $s_0=1$ and $v=0$ for all variables $v \neq s_0$. The safe set of values for $s_0$ is $[0,1]$. As seen in Lemma \ref{one}, computation starts in mode $\mathcal{I}$, and one unit of time is spent there.
This results in $s_0=0, w_{01}=1$, and $v=0$ for all other variables. The control then shifts to modes 
$\Mm_0, \Mm_{01}$ in order. One unit of time is spent in $\Mm_0$, and we obtain 
$w_{01}=0, x_{01}=c_1=1$, and $v=0$ for all other variables $v$. 
This is followed by spending one unit of time in $\Mm_{01}$, obtaining 
$x_{01}=0$, $w_{12}=1$, $c_1=1$ and $v=0$. The control shifts from 
  $\Mm_{01}$ to $\Mm_1$, where one unit of time is spent. This 
  results in $w_{12}=0, x_{12}=1$ and $c_1=0$, and $v=0$ 
  for all other variables $v$. Control then shifts to $\Mm_{12}$, where one unit of time is spent.
  This results in obtaining $z_{2\#}=1$ and $c_1=1$, and $v=0$ for all other variables. Control then shifts 
    to $\Mm_2^1$, and one unit of time is spent, resulting in $z_{2\#}=0, x_{20}=1, c_1=0$. This is continued by visiting      mode $\Mm_{20}$, and we obtain $x_{20}=0, w_{01}=1$, after spending one unit of time in $\Mm_{20}$.  
     Any other sequence of visiting modes, or spending times other than 1 in the visited modes 
   will result in hitting an obstacle.

   The computation now proceeds to modes $\Mm_0, \Mm_{01}, \Mm_1, \Mm_{12}, \Mm_2^1, \Mm_{20}$ 
 in a loop.  Note that the halt mode $\Mm_3$ is never reached.  


\begin{thebibliography}{10}

\bibitem{ACHH92}
R.~Alur, C.~Courcoubetis, T.~A. Henzinger, and P.-S. Ho.
\newblock Hybrid automata: An algorithmic approach to the specification and
  verification of hybrid systems.
\newblock In {\em Hybrid Systems}, pages 209--229, 1992.

\bibitem{alurDill94}
R.~Alur and D.~Dill.
\newblock A theory of timed automata.
\newblock {\em Theoretical Computer Science}, 126:183--235, 1994.

\bibitem{AFMT13}
R.~Alur, V.~Forejt, S.~Moarref, and A.~Trivedi.
\newblock Safe schedulability of bounded-rate multi-mode systems.
\newblock In {\em HSCC}, pages 243--252, 2013.

\bibitem{ATW12}
R.~Alur, A.~Trivedi, and D.~Wojtczak.
\newblock Optimal scheduling for constant-rate multi-mode systems.
\newblock In {\em HSCC}, pages 75--84, 2012.

\bibitem{AMP95}
E.~Asarin, M.~Oded, and A.~Pnueli.
\newblock Reachability analysis of dynamical systems having piecewise-constant
  derivatives.
\newblock {\em TCS}, 138:35--66, 1995.

\bibitem{BBM98}
M.~S. Branicky, V.~S. Borkar, and S.~K. Mitter.
\newblock A unified framework for hybrid control: Model and optimal control
  theory.
\newblock {\em Automatic Control}, 43(1):31--45, 1998.

\bibitem{clarke2001bounded}
E~Clarke, A~Biere, R~Raimi, and Y~Zhu.
\newblock Bounded model checking using satisfiability solving.
\newblock {\em Formal methods in system design}, 19(1):7--34, 2001.

\bibitem{DeMoura:2008:ZES:1792734.1792766}
L~De~Moura and Nikolaj B.
\newblock Z3: An efficient smt solver.
\newblock In {\em TACAS}, TACAS'08/ETAPS'08, pages 337--340, Berlin,
  Heidelberg, 2008. Springer-Verlag.

\bibitem{Firby89}
Robert~James Firby.
\newblock {\em Adaptive Execution in Complex Dynamic Worlds}.
\newblock PhD thesis, Yale University, New Haven, CT, USA, 1989.
\newblock AAI9010653.

\bibitem{frazzoli2000robust}
Emilio Frazzoli, Munther~A Dahleh, and Eric Feron.
\newblock Robust hybrid control for autonomous vehicle motion planning.
\newblock In {\em Decision and Control, 2000. Proceedings of the 39th IEEE
  Conference on}, volume~1, pages 821--826. IEEE, 2000.

\bibitem{Gat98}
Erann Gat.
\newblock Three-layer architectures.
\newblock In David Kortenkamp, R.~Peter Bonasso, and Robin Murphy, editors,
  {\em Artificial Intelligence and Mobile Robots}, pages 195--210. MIT Press,
  Cambridge, MA, USA, 1998.

\bibitem{Hen96}
T.~A. Henzinger.
\newblock The theory of hybrid automata.
\newblock In {\em LICS' 96}, pages 278--, Washington, DC, USA, 1996. IEEE
  Computer Society.

\bibitem{HK99}
T.~A. Henzinger and P.~W. Kopke.
\newblock Discrete-time control for rectangular hybrid automata.
\newblock {\em TCS}, 221(1-2):369--392, 1999.

\bibitem{HKPV98}
T.~A. Henzinger, P.~W. Kopke, A.~Puri, and P.~Varaiya.
\newblock What's decidable about hybrid automata?
\newblock {\em Journal of Comp. and Sys. Sciences}, 57:94--124, 1998.

\bibitem{KTINTH15}
S.~Kato, E.~Takeuchi, Y.~Ishiguro, Y.~Ninomiya, K.~Takeda, and T.~Hamada.
\newblock An open approach to autonomous vehicles.
\newblock {\em IEEE Micro}, 35(6):60--68, Nov 2015.

\bibitem{latombe2012robot}
J~Latombe.
\newblock {\em Robot motion planning}, volume 124.
\newblock Springer, 2012.

\bibitem{Lav06}
S.~M. LaValle.
\newblock {\em Planning Algorithms}.
\newblock Cambridge University Press, Cambridge, U.K., 2006.
\newblock Available at http://planning.cs.uiuc.edu/.

\bibitem{Min67}
M~L. Minsky.
\newblock {\em Computation: finite and infinite machines}.
\newblock Prentice-Hall, Inc., 1967.

\bibitem{apex}
M~O'Kelly, H~Abbas, S~Gao, S~Shiraishi, S~Kato, and R~Mangharam.
\newblock Apex: A tool for autonomous vehicle plan verification and execution.
\newblock In {\em In Society of Automotive Engineers (SAE) World Congress and
  Exhibition}, 2016.

\bibitem{saha}
Indranil Saha, R~Ramaithitima, V~Kumar, G~J. Pappas, and S~A. Seshia.
\newblock Implan: Scalable incremental motion planning for multi-robot systems.
\newblock In {\em {ICCPS} 2016}, pages 43:1--43:10, 2016.

\bibitem{SS83}
Jacob~T Schwartz and Micha Sharir.
\newblock On the ``piano movers'' problem. ii. general techniques for computing
  topological properties of real algebraic manifolds.
\newblock {\em Adv. Appl. Math.}, 4(3):298--351, September 1983.

\bibitem{sucan2012the-open-motion-planning-library}
Ioan~A. {\c{S}}ucan, Mark Moll, and Lydia~E. Kavraki.
\newblock The {O}pen {M}otion {P}lanning {L}ibrary.
\newblock {\em {IEEE} Robotics \& Automation Magazine}, 19(4):72--82, December
  2012.
\newblock \url{http://ompl.kavrakilab.org}.

\end{thebibliography}
\end{document}